\documentclass[lettersize,journal]{IEEEtran}
\usepackage{amsmath,amsfonts,amssymb,bm}
\usepackage{algorithm}
\usepackage{algorithmic}
\usepackage{array}
\usepackage[caption=false,font=small,labelfont=rm,textfont=rm]{subfig}
\usepackage{textcomp}
\usepackage{stfloats}
\usepackage{xcolor}
\usepackage{url}
\usepackage{dsfont}
\usepackage{todonotes}
\usepackage{verbatim}
\usepackage{graphicx}
\usepackage{cite}
\usepackage{booktabs}
\usepackage{multirow}
\usepackage{stmaryrd}

\usepackage{tikz}
\usetikzlibrary{positioning,arrows.meta,fit,calc,shapes.geometric}
\usetikzlibrary{decorations.pathreplacing}

\usepackage{theorem}
{\theorembodyfont{\rmfamily}
   }
{\theorembodyfont{\rmfamily}
   \newtheorem{The}{{\textbf Theorem}}}
{\theorembodyfont{\rmfamily}
   }
{\theorembodyfont{\rmfamily}
   \newtheorem{Lem}{{\textbf Lemma}}}
{\theorembodyfont{\rmfamily}
   \newtheorem{Cor}{{\textbf Corollary}}}
{\theorembodyfont{\upshape}
   }
{\theorembodyfont{\upshape}
   }
{\theorembodyfont{\rmfamily}
   \newtheorem{Ass}{{\textbf Remark}}}

\usepackage{pgfplots}

\usepackage[acronym,shortcuts]{glossaries}
\pgfplotsset{compat=1.18}

\newacronym{AWGN}{AWGN}{additive white Gaussian noise}
\newacronym{BP}{BP}{belief propagation}
\newacronym{BPSK}{BPSK}{binary phase-shift keying}
\newacronym{CNN}{CNN}{convolutional neural network}
\newacronym{CRC}{CRC}{cyclic redundancy check}
\newacronym{C2C}{C2C}{control-to-control}
\newacronym{DL}{DL}{deep learning}
\newacronym{GLRT}{GLRT}{generalized likelihood ratio test}
\newacronym{LDPC}{LDPC}{low-density parity-check}
\newacronym{LLR}{LLR}{log-likelihood ratio}
\newacronym{LoS}{LoS}{line-of-sight}
\newacronym{LRT}{LRT}{likelihood ratio test}
\newacronym{MIMO}{MIMO}{multiple-input multiple-output}
\newacronym{mMTC}{mMTC}{massive machine-type communications}
\newacronym{MRC}{MRC}{maximal ratio combining}
\newacronym{NN}{NN}{neural network}
\newacronym{NMS}{NMS}{normalized min-sum}
\newacronym{PGF}{PGF}{probability generating function}
\newacronym{PLR}{PLR}{packet loss rate}
\newacronym{ReLU}{ReLU}{rectified linear unit}
\newacronym{ROC}{ROC}{receiver operating characteristic}
\newacronym{SF}{SF}{superframe}
\newacronym{SIC}{SIC}{successive interference cancellation}
\newacronym{SISO}{SISO}{soft-input soft-output}
\newacronym{SVM}{SVM}{support vector machine}
\newacronym{URLLC}{URLLC}{ultra-reliable and low-latency communication}
\newacronym{VF}{VF}{virtual frame}
\newacronym{FLOP}{FLOP}{floating point operation}

\def\BibTeX{{\rm B\kern-.05em{\sc i\kern-.025em b}\kern-.08em
    T\kern-.1667em\lower.7ex\hbox{E}\kern-.125emX}}
\usepackage{balance}

\begin{document}
\title{A Deep Learning-based Receiver for Asynchronous  Grant-Free Random Access in Control-to-Control Networks}
\author{Massimo Battaglioni, Edoardo Carnevali, Dania De Crescenzo, Enrico Testi, Marco Baldi, Enrico Paolini
\thanks{An earlier version of this paper was presented in part at the IEEE Conference on Standards for Communications and Networking (CSCN), September 2025 [DOI: 10.1109/CSCN67557.2025.11230747].\\
The work of Massimo Battaglioni, Dania De Crescenzo, Enrico Testi, and Enrico Paolini was supported in part by the European Union under the Italian National Recovery and Resilience Plan (NRRP) of NextGenerationEU, partnership on ``Telecommunications of the Future'' (PE00000001 - program ``RESTART'').\\
Massimo Battaglioni, Edoardo Carnevali, and Marco Baldi are with Dipartimento di Ingegneria dell'Informazione, Universit\`a Politecnica delle Marche, 60131 Ancona, Italy. Dania De Crescenzo and Enrico Paolini are with CNIT/WiLab, DEI, University of Bologna,
47522 Cesena, Italy. Enrico Testi is with FixIA s.r.l.. Corresponding author: Massimo Battaglioni (email: m.battaglioni@univpm.it).}}

\maketitle

\begin{abstract}
In this paper, we study grant-free, asynchronous control-to-control (C2C) communications in an indoor scenario with a shared wireless channel. Each communication node transmits command units, each consisting of a variable-length low-density parity-check (LDPC)--coded payload preceded by a start sequence and followed by a tail sequence. Due to the asynchronous nature of the access, transmissions from different nodes are not aligned over time. As a result, each receiving controller observes the superposition of multiple command units transmitted by different nodes over a receiver-defined superframe interval. Each node transmits one or more replicas of the same command unit. We propose a receiver architecture in which the detection of command unit boundaries (start/tail sequences) is carried out by a single convolutional neural network (CNN) operating directly on the received signal. We show that, while start-sequence detection must rely only on the received waveform, tail-sequence detection can additionally exploit the soft information produced by the LDPC decoder, together with channel estimates. Finally, once commands units are  successfully decoded, successive interference cancellation (SIC) can be applied. Simulation results demonstrate that the receiver we propose achieves reliable packet-boundary identification and a low end-to-end packet loss rate, even under uncoordinated and high-traffic operating conditions.\\ 
\end{abstract}

\begin{IEEEkeywords}
Control-to-control communications, deep learning, grant-free random access.
\end{IEEEkeywords}

\section{Introduction}

Modern motion control systems are increasingly designed as collections of autonomous yet interdependent subsystems, where coordination emerges from local interactions rather than from global supervision \cite{Aceto2019}. This paradigm shift is particularly evident in autonomous robotics, cyber-physical platforms, and industrial automation, where stringent real-time and reliability constraints render centralized control impractical \cite{Cuozzo2025}. 
Also modular manufacturing systems and large-scale industrial machinery often operate under conditions which make centralized coordination either infeasible or undesirable. Controllers may be added or removed during operation (hot-plugging), communication interactions can vary over time, and traffic characteristics range from periodic control updates to sporadic event-driven messages.

\Ac{C2C} communications allow motion controllers to exchange state information and coordination data directly, without relying on centralized schedulers or strictly time-driven control loops. While this approach improves adaptability, fault tolerance, and responsiveness to topology changes, it fundamentally reshapes the design space of the communication layer, introducing new challenges related to scalability, interference, and computational overhead. As a result, the design of \ac{C2C} communication architectures has a direct impact on the ability of distributed motion subsystems to operate safely and meet real-time constraints, especially in dynamic and decentralized environments \cite{3gppTR22804, 5GACIA_IIoT_2021}.

In this context, grant-free asynchronous access over the wireless channel reduces protocol complexity and shortens reaction times, but shifts a significant portion of the system burden to the physical layer. In the absence of centralized scheduling and global time references, the receiver must cope with partially overlapping transmissions, uncertain arrival times, besides channel variations over time. As a result, reliable message detection and decoding must be achieved under increased interference and timing uncertainty, making physical-layer design a critical aspect of such communication systems. This paper proposes a solution to these physical-layer challenges for grant-free, asynchronous \ac{C2C} communications.

\subsection{Related works}

One scenario that has been extensively studied in the existing literature is that of \ac{mMTC}, in which multiple access techniques aim to address the fundamental problem of scalability, which makes uncoordinated, grant-free approaches preferable to solutions based on static resource allocation, even at the cost of tolerating high error rates and long latencies \cite{Choi2022,Liva2024}.
On the other hand, \ac{URLLC} applications are based on coordinated access and are primarily designed to achieve low latency, but this does not allow them to easily scale as the number of devices grows.

The solution we explore in this paper represents a trade-off between these two approaches.
In fact, we consider a scenario in which there are numerous nodes (though not massive in number) making orchestration inefficient, and where there are requirements for latency and reliability.
This requires the use of short codes to achieve low latency, yet capable of handling packets of variable length, unlike what occurs in coordinated systems.
Furthermore, the proposed approach is fully decentralized, and thus capable of eliminating any bottlenecks and single points of failure resulting from the presence of a single central controller.

Asynchronous random access and grant-free transmission schemes have been extensively investigated in the literature. Early works such as \cite{DeGaud2014} extend contention resolution diversity slotted ALOHA to the asynchronous domain by exploiting packet replication and \ac{SIC}. Similarly, grant-free access frameworks, e.g., \cite{Azari2017}, exploit time and frequency asynchronism together with replica diversity to enable massive connectivity, relying on correlation-based detection and \ac{SIC} at the receiver. From an abstract point of view, graph-based methods (e.g., \cite{Paolini2015}) model random access as iterative decoding on sparse graphs. However, they typically operate at the packet level, treating collisions as erasures or resolving them via replica-based \ac{SIC}, without addressing waveform-level detection and decoding. 

In parallel, some contributions have addressed communication design for industrial and control-oriented scenarios. For instance, \cite{Chang2021} focuses on the joint design of communication and control by mapping control requirements into reliability constraints and optimizing device activation probabilities, while MAC-layer solutions such as \cite{Gao2021} develop scheduling and access protocols to support dense deployments and stringent quality-of-service requirements. These approaches operate at a higher abstraction level and do not consider the interaction between asynchronous access, signal detection, and channel decoding.
Along similar lines, \cite{Almonacid2017} addresses asynchronism within a protocol-oriented design based on packet replicas and interference resolution, implicitly assuming that packet detection and alignment are available. The work \cite{Azari2021} studies asynchronous grant-free access through a protocol-level and stochastic-geometry framework with replica-based transmissions, without considering signal-level processing techniques to recover overlapping packets. In \cite{Hu2024}, the authors handle asynchronism via MAC-layer retransmission strategies coordinated by a central access point, treating collisions as  undecodable events. Other works operate closer to the signal level, yet differ substantially from the approach proposed here. In \cite{Zhu2021}, asynchronous access is treated as a sparse estimation problem aimed at identifying active users and their delays via a learned approximate message-passing architecture; however, recovering entire coded transmissions goes beyond reveiver operations at the signal level, which is the focus of the present work. In \cite{Jiang2022}, interference is handled via spatial processing at a centralized massive-\ac{MIMO} base station in a slotted uplink, relying on spatial separation and coordination rather than waveform structure.   Regarding the use of channel coding, \cite{Ebrahimi2017} employs \ac{LDPC} codes within a frame-synchronous random access scheme with known transmission patterns at the protocol level, which is fundamentally different from the signal-level role that channel coding plays in the proposed receiver.

In contrast to prior work, this paper tackles asynchronous grant-free access at the signal level, modeling symbol-level interference and proposing a joint chain for detection, channel estimation, soft decoding, and \ac{SIC}. The approach is fully decentralized and leverages waveform and code structure, with tight coupling between detection and decoding to refine packet boundaries without idealized packet-level assumptions. Deep learning-based methods for grant-free random access have been proposed in \cite{deSou2023,Khan2024,Decre2025}. However, these techniques address isolated tasks of the access procedure, such as activity detection, preamble detection, or replica detection/combining, each requiring a dedicated \ac{NN}. 
By contrast, our approach employs a single \ac{NN} that jointly performs start-sequence and tail-sequence detection within a unified end-to-end framework, reducing architectural complexity while capturing the structural dependencies between the two tasks. Moreover, decoder-side information is explicitly used for tail-sequence detection, introducing a stronger interaction between synchronization and channel decoding than in previous approaches.

\subsection{Our contribution}

We consider an unsourced random access communication scenario \cite{Yuri2017}, in which transmissions are not associated with explicit transmitter identities and are decoded solely based on their content. Controllers access the channel asynchronously and without prior coordination, all employing a common codebook; consequently, the receiver cannot rely on protocol-layer framing, scheduling, or metadata to delineate individual messages. To enable physical-layer message segmentation under these conditions, and inspired by telecommand synchronization and coding schemes used in deep-space communications \cite{bluebook}, we assume that each controller transmits a self-contained command unit composed of three elements: a start sequence for detection and synchronization, a \emph{variable-length}  \ac{LDPC}-coded payload, and a known tail sequence marking the end of the transmission.  Building on this design, we propose a receiver architecture that performs transmission-boundary recovery directly at the physical layer, thereby avoiding centralized control and preserving the low-latency and scalability properties required by grant-free \ac{C2C} communication. The setting we consider, therefore, differs significantly from traditional designs, where metadata or scheduling ensures message alignment. In the context we consider, all structure is deduced from the signal itself.

To address this problem, we propose a receiver architecture in which a single \ac{CNN} is employed to detect both the start and the tail sequence of each command unit at different stages of reception. Unlike the above existing approaches, which are limited to fixed-length transmissions, the receiver we propose is explicitly designed to handle variable-length coded payloads, for which end-of-packet detection at the physical layer constitutes a central challenge. The \ac{CNN} operates in coordination with the \ac{LDPC} decoder by exploiting soft information derived from the decoding process, including \acp{LLR}. In particular, decoder feedback is used to validate the detection of the tail sequence and to reliably determine transmission termination in the presence of asynchronous access and interference.
Numerical simulations show that the approach we propose for start and tail sequence detection can outperform traditional methods, such as correlators and \ac{LRT}-based detectors \cite{Chiani2006}, as well as recently proposed solutions, such as decoder-based  \cite{giuliani2025access} and machine-learning-based \cite{battaglioni2025mltail} detectors.

The main distinction with respect to the conference version of this paper \cite{battaglioni_cscn} lies in the scope of the considered receiver functionality. While \cite{battaglioni_cscn} primarily addresses boundary detection, the present work encompasses the complete receiver chain, from signal detection to channel decoding. This broader perspective enables performance assessment at the system level, using communication-level reliability metrics rather than detection accuracy alone. Moreover, it allows the identification of critical operating conditions that most adversely affect packet recovery, as well as the theoretical  characterization of the probability of occurrence of such events.

\subsection{Notation}\label{subsec:not}

The set of complex and real numbers are denoted by $\mathbb{C}$ and $\mathbb{R}$, respectively. The notation $[a,b]$ represents the set of integers between $a$ and $b$, inclusive. All logarithms are assumed to be base-$e$, unless otherwise specified. For a logical condition $\mathcal{P}$, the Iverson bracket $\llbracket \mathcal{P} \rrbracket$ equals $1$ if $\mathcal{P}$ is true, and $0$ otherwise.

We denote vectors by bold lowercase letters (e.g., $\mathbf{x}$), and matrices by bold uppercase letters (e.g., $\mathbf{H}$). The transpose and the Hermitian (conjugate transpose) operators are denoted by $(\cdot)^\top$ and $(\cdot)^\mathrm{H}$, respectively.  Given a vector $\mathbf{x} \in \mathbb{C}^N$, we use $x[i]$ to denote its $i$-th element, with $i \in [0,N-1]$ unless otherwise specified. We denote the length-$R$ all-one (resp., all-zero) column vector by $\mathbf{1}_R$ (resp., $\mathbf{0}_R$). 

 Random variables are indicated using standard notation (e.g., $w[i] \sim \mathcal{CN}(0,N_0)$ denotes a circularly symmetric complex Gaussian variable with zero mean and variance $N_0$). Expectation is denoted by $\mathbb{E}[\cdot]$.

Throughout the paper, time is treated as discrete, and all signals are indexed accordingly. Transmissions from controller $j$ are denoted as $\mathbf{x}^{(j)}$, and asynchronous delays are represented by integer offsets $\tau_j \in \mathbb{Z}$. The received signal at a designated receiver is denoted by $\mathbf{r}$.

\subsection{Paper outline}

The paper is organized as follows.  The system model is introduced in Section \ref{sec:sys_mod}. A theoretical analysis of the occurrence probability of the most detrimental operating conditions for the receiver is presented in Section~\ref{sec:the}.   The \ac{CNN}-based architecture for start and tail sequence detection is presented in Section \ref{sec:bound}.  Section~\ref{sec:numres} reports a comprehensive set of results of numerical simulations, while some concluding remarks are provided in Section \ref{sec:concl}.

\section{System model}\label{sec:sys_mod}

We consider a wireless system deployed in a three-dimensional (3D) indoor environment, where an unspecified number of controllers operate over a shared communication medium. Due to hot-plugging capabilities, controllers may enter or leave the environment at any time, resulting in a dynamic and a priori unknown number of active devices (see Fig. \ref{fig:C2Cscenario} for an example). Within this setting, we model an asynchronous, grant-free \ac{C2C} communication scenario representative of such distributed motion systems. Communication occurs over short-range wireless links characterized by flat Rician fading. Each controller is equipped with $R$ antennas, with a single antenna used for transmission and all $R$ antennas simultaneously employed by the receiver to exploit spatial diversity.  

\begin{figure}[t]
    \centering
    \includegraphics[width=0.7\linewidth]{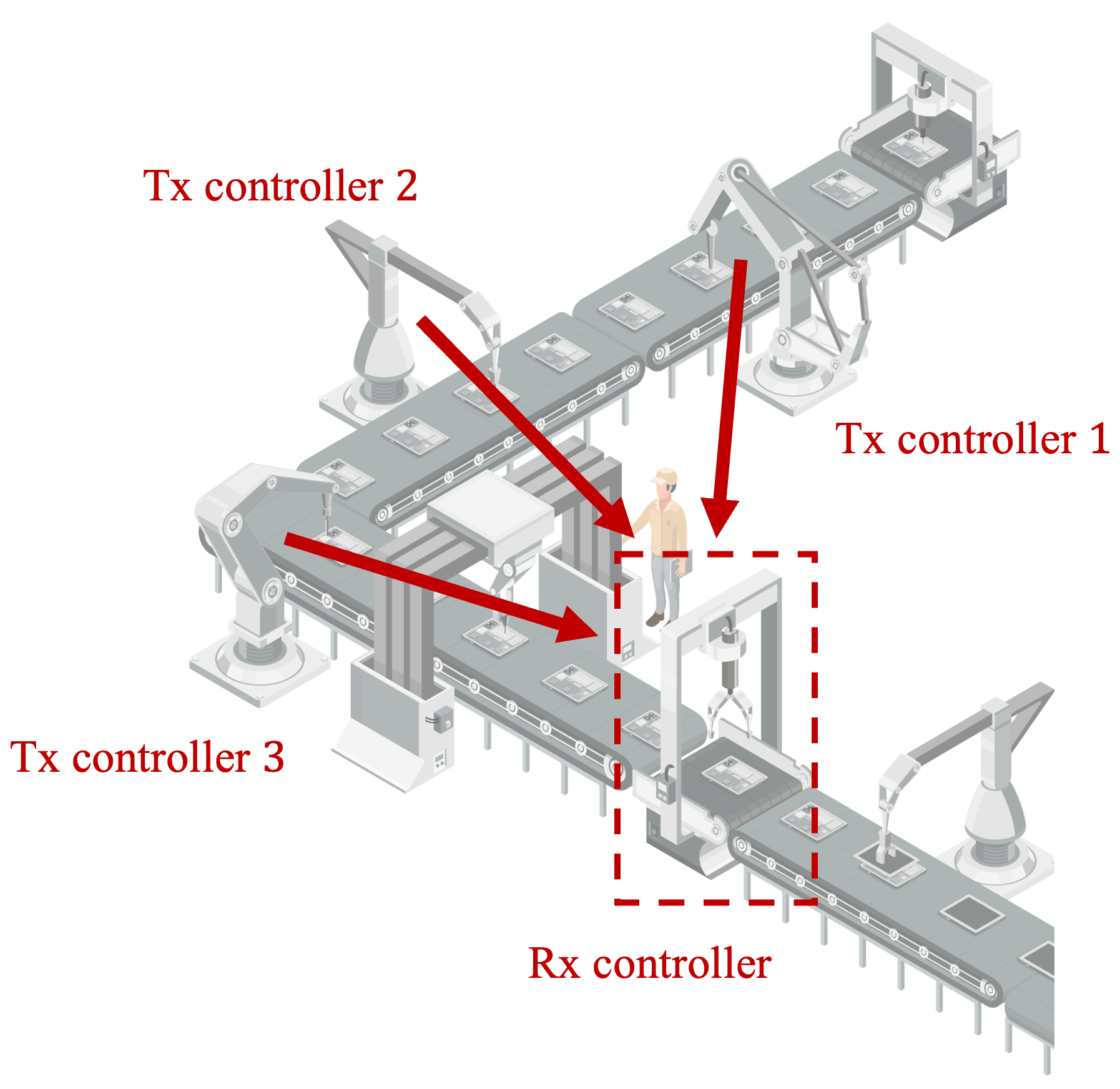}
    \caption{Example of a \ac{C2C} communication setting in which multiple upstream supervisory controllers convey task-specific directives to a downstream execution controller responsible for real-time coordination of assembly-line processes.}
    \label{fig:C2Cscenario}
\end{figure}

In the architecture we consider, motion controllers exchange \ac{LDPC}-coded messages without either coordination or centralized scheduling. Each transmission consists of a self-contained communication unit comprising a start sequence, a variable number of sequentially concatenated \ac{LDPC} codewords, and a tail sequence that marks the end of the transmission. Use of the tail sequence enables reliable detection of the end of the coded data block without relying on upper-layer protocols.

\subsection{Transmission model}

The controllers are assumed to be randomly positioned within a bounded 3D parallelepiped of dimensions \( L_x \), \( L_y \), and \( L_z \). The spatial separation between controllers is characterized by the inter-controller distance \( d_i \), which arises from uniformly distributed node locations within this volume. Accordingly, \( d_i \) is modeled as a uniformly distributed random variable,
\[
d_i \sim \mathcal{U}(0, d_{\max}),
\]
where \( d_{\max} = \sqrt{L_x^2 + L_y^2 + L_z^2} \) denotes the maximum possible separation, corresponding to the length of the space diagonal of the parallelepiped.

\begin{figure}[t]
    \centering
    \includegraphics[width=0.9\linewidth]{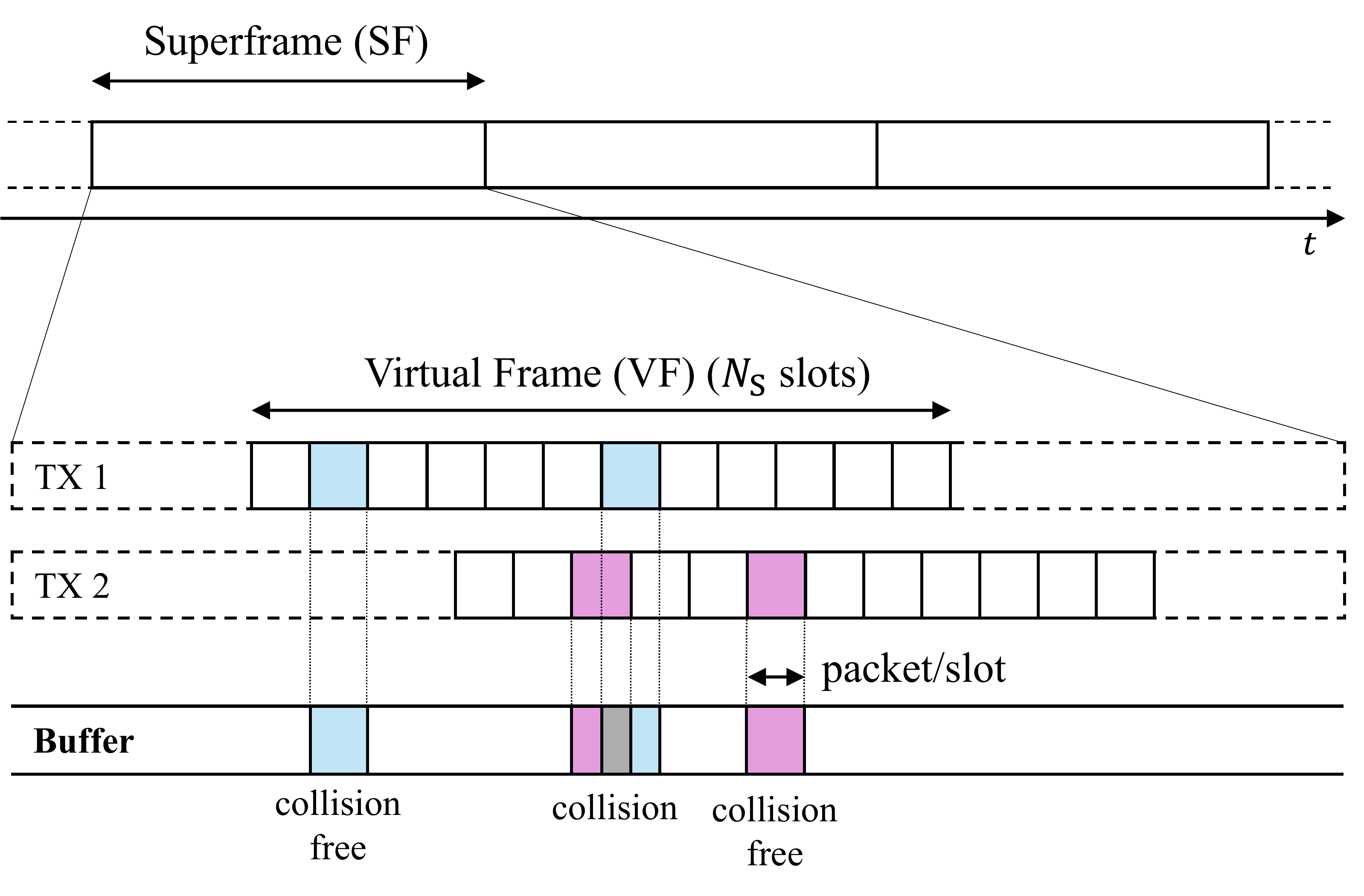}
    \caption{Representation of the hierarchical time organization into SFs, VFs, and slots, for a given receiving controller.}
    \label{fig:time}
\end{figure}

As illustrated in Fig.~\ref{fig:time}, time is hierarchically structured into \acp{SF}, \acp{VF}, and transmission slots. Each \ac{SF} comprises multiple \acp{VF}, and each \ac{VF} is further partitioned into a finite number of transmission slots used for message replication. The duration of a \ac{SF}, measured in symbol intervals, is not fixed a priori and is denoted by \( T_{\mathrm{SF}} \). Each \ac{VF} has a duration
\[
T_{\mathrm{VF}} = N_{\mathrm{S}} T_{\mathrm{S}},
\]
where \( T_{\mathrm{S}} \) is the slot duration in symbol intervals and \( N_{\mathrm{S}} \) is the number of slots per \ac{VF}. By construction, the hierarchy satisfies \( T_{\mathrm{SF}} > T_{\mathrm{VF}} \).

\subsubsection{Superframes}

Each \ac{SF} is analyzed from the perspective of a single reference controller operating in reception mode for the entire duration of the \ac{SF}. We assume that the signals intended for this controller can be distinguished from those associated with other controllers. This assumption is purely analytical and does not imply any explicit receiver selection, coordination, or scheduling among controllers. It allows us to characterize the reception process at an individual controller while abstracting away the simultaneous operation of other receivers in the network.

This modeling choice differs from the approach adopted in \cite{battaglioni_cscn}, where each \ac{SF} has a fixed duration and a single controller is randomly selected at its beginning and designated as the only receiver for the entire \ac{SF}.

Multiple access from the other controllers is modeled through a contention-based transmission process. Within a \ac{SF} associated with a given receiver, each controller in standby mode
decides independently whether to initiate a transmission at the beginning of each symbol interval. Specifically, a controller becomes active with probability \( p \), according to a memoryless Bernoulli process. This leads to an average activation rate of \( p \) activations per symbol interval and converges to a Poisson process in the limit of vanishing symbol durations. 

Upon activation, a controller is assigned a \ac{VF} of fixed duration and remains active for its entire length. During this period, it transmits \( N_{\mathrm{rep}} \) replicas of the same message.  Once the assigned \ac{VF} has elapsed and all scheduled transmissions are completed, the controller returns to standby mode and can be later reactivated in the same way.

\subsubsection{Virtual frames}

Motivated by the random-access frameworks in \cite{Casini2007, Khan2024}, we adopt a replication-based transmission strategy to enhance reliability under asynchronous and uncoordinated access. When a controller becomes active, it transmits \( N_{\mathrm{rep}} \) replicas of the same message within a \ac{VF}, which consists of \( N_{\mathrm{S}} \) transmission slots. Each active controller independently and uniformly selects \( N_{\mathrm{rep}} \) distinct slots from the set \([0,N_{\mathrm{S}}-1]\), placing one replica in each selected slot using its single transmit antenna. Slot selection is performed independently across controllers, and \acp{VF} are initiated asynchronously, without coordination or explicit signaling toward the receiver. The latter, equipped with \( N_R \) antennas, attempts to decode any replica that is successfully received.

Under the assumption
of independent replica placement and finite replica duration,  the superposition
of active transmissions at the receiver results in a random number of replicas
that overlap at any given time instant. 

\subsubsection{Slots}

Each slot within a \ac{VF} accommodates up to \( L_{\max} \) symbols, corresponding to the maximum packet length determined a priori by the applicative scenario. If the actual packet length is \( L < L_{\max} \), the transmitting controller randomly selects the packet starting position within the slot, uniformly over the interval \( [0,\, L_{\max} - L] \).

\subsubsection{Command unit structure}

The variable-length commands, representing the information bits, are first encoded into codewords using block \ac{LDPC} coding. Each command unit contains $\iota \cdot k$ information bits, where $\iota$ is a positive integer denoting the number of codewords contained in a single command unit. These codewords are then modulated using \ac{BPSK}, yielding a transmitted sequence
\[
\mathbf{x}_{\text{code}} \in \{-1,1\}^{\iota \cdot L_{\text{code}}},
\]
where \( L_{\text{code}} \) denotes the LDPC code block length. We assume that $L_{\text{code}}$ divides $L_{\max}$, so that the maximum number of encoded codewords in the command unit is $\iota_{\max}\triangleq\frac{L_{\max}}{L_{\text{code}}}$.

In total, each command unit transmitted by a controller is composed of $L = L_{\text{pre}} + \iota \cdot L_{\text{code}} + L_{\text{tail}}$ symbols, where:
\begin{itemize}
    \item $L_{\text{pre}}$ is the length of the start sequence;
    \item $L_{\text{code}}$ is the length of an \ac{LDPC} codeword;
    \item $L_{\text{tail}}$ is the length of the tail sequence.
\end{itemize}

Under the above assumptions, the transmitted signal corresponding to a single command unit can be expressed as
\[
\mathbf{x} = [\mathbf{x}_{\text{pre}}, \mathbf{x}_{\text{code}}, \mathbf{x}_{\text{tail}}] \in \{-1,1\}^L.
\]

\subsubsection{LDPC Coding}

The $\iota \cdot k$ information bits of a command are encoded into $\iota$ codewords, each of length $L_{\mathrm{code}}$, using a binary $(L_{\mathrm{code}},k)$ \ac{LDPC} code. Decoding is performed using \ac{BP}. In particular, the \ac{NMS} \cite{FossoNMS} soft-input iterative decoding algorithm running (exactly\footnote{This choice is motivated by the use of the decoder output in the tail-sequence detector. To ensure consistency, the decoder must be evaluated at a fixed iteration. Early stopping would produce outputs at different convergence stages, resulting in non-comparable decoder states across candidates.}) $i_{\max}$ iterations, which can also yield intermediate states useful for the tail sequence detection, is employed.

\subsection{Channel model}

Consider the $i$-th transmitter. Let $\mathbf{h}_i \in \mathbb{C}^{R}$ 
denote the flat-fading channel vector between the transmitter and the $R$ receive antennas. The channel is assumed to be \emph{block-static} over a transmission block of length $L$ symbols, i.e.,
\begin{equation}
\mathbf{h}_i[t] = \mathbf{h}_i, \quad \forall\, t \in [0, L-1].
\end{equation}
This assumption reflects the structured and relatively stable arrangement of motion subsystems in industrial environments, where channel variations are negligible over the duration of a single transmission block. We further assume that multiple replicas transmitted by the same controller experience identical large-scale fading, including common path loss, deterministic \ac{LoS} components, and average scattering power. In contrast, small-scale fading varies independently across replicas. Specifically, for the $j$-th replica, the channel vector is modeled as
\begin{equation}
\mathbf{h}_i^{(j)} = \mu_i \mathbf{1}_R + \mathbf{g}_i^{(j)},
\end{equation}
where $\mu_i \in \mathbb{C}$ represents the deterministic \ac{LoS} component associated with transmitter $i$, and
\begin{equation}
\mathbf{g}_i^{(j)} \sim \mathcal{CN}\!\left(\mathbf{0}, \sigma_i^2 \mathbf{I}_R\right)
\end{equation}
models the replica-dependent small-scale fading. Equivalently, each antenna coefficient satisfies
\begin{equation}
h_{i,r}^{(j)} \sim \mathcal{CN}(\mu_i, \sigma_i^2), \quad r \in [0, R-1],
\end{equation}
with independent realizations across replicas $j \in [0, N_{\mathrm{rep}}-1]$. The corresponding Rician $K$-factor for transmitter $i$ is defined as
\begin{equation}
K_i = \frac{|\mu_i|^2}{2\sigma_i^2}.
\end{equation}
The total average received power from transmitter $i$ is:
\begin{equation}
P_i = 2\sigma_i^2 (K_i + 1),
\end{equation}
and accounts for large-scale path loss and log-normal shadowing according to
\begin{equation}
P_i = P_0 \, \gamma_i \left( \frac{d_{\mathrm{ref}}}{d_i} \right)^{\beta},
\qquad
P_0 = P_t \left( \frac{\lambda_c}{4\pi d_{\mathrm{ref}}} \right)^2,
\label{eq:logno}
\end{equation}
where $P_t$ is the transmit power, $\lambda_c$ is the carrier wavelength, $d_i$ is the distance between transmitter $i$ and the receiver, $\beta$ is the path loss exponent, and $d_{\mathrm{ref}}$ is a reference distance. The shadowing term $\gamma_i$ is modeled as a log-normal random variable:
\begin{equation}
10 \log_{10} \gamma_i \sim \mathcal{N}(0, \sigma_{\mathrm{dB}}^2),
\end{equation}
with $\sigma_{\mathrm{dB}}$ denoting the shadowing standard deviation in the dB domain. We also consider \ac{AWGN}, modeled as
\begin{equation}
\mathbf{w} \sim \mathcal{CN}\!\left(\mathbf{0}, \sigma_w^2 \mathbf{I}_R\right),
\end{equation} where
$\sigma_w^2$ denotes the noise variance per receive antenna and $\mathbf{I}_R$ is the identity matrix with side $R$.

\subsection{Receiver operations}

The receiver performs a sequence of steps to detect, estimate, and decode incoming command frames under asynchronous, uncoordinated transmissions.  

The receiving process begins with start sequence detection, whose goal is to identify the beginning of a command frame. The estimated starting time is denoted by \( \hat{\tau} \in \mathbb{Z} \). Detection can be implemented using traditional correlation-based methods, such as \ac{GLRT}, or through data-driven techniques, such as properly trained deep \acp{NN}. The performance of these approaches is evaluated experimentally in Section~\ref{sec:numres}.

Upon detection of a start sequence, the
receiver extracts the corresponding segment of received samples:
\begin{equation}
\mathbf{R}_{\mathrm{pre}}
=
\big[
\mathbf{r}[\hat{\tau}],
\mathbf{r}[\hat{\tau}+1],
\ldots,
\mathbf{r}[\hat{\tau}+L_{\mathrm{pre}}-1]
\big]
\in \mathbb{C}^{R \times L_{\mathrm{pre}}}.
\end{equation}
 Using the known start 
sequence $\mathbf{x}_{\mathrm{pre}} \in \{-1,1\}^{L_{\mathrm{pre}}}$, the channel
vector is estimated according to a least-squares criterion (which also coincides with the maximum-likelihood estimator under \ac{AWGN}) as
\begin{equation}
\hat{\mathbf{h}}
=
\frac{\mathbf{R}_{\mathrm{pre}}\,\mathbf{x}_{\mathrm{pre}}^{\mathrm{H}}}
{\|\mathbf{x}_{\mathrm{pre}}\|^2}.
\label{eq:ch_estimate}
\end{equation}
The estimated channel vector $\hat{\mathbf{h}}$ is then employed to coherently
combine the signals received across the $R$ antennas. In particular, for each
symbol time $t \geq \hat{\tau}$, the transmitted symbol is estimated via
\ac{MRC} as
\begin{equation}
\hat{x}[t-\hat{\tau}]
=
\frac{\hat{\mathbf{h}}^{\mathrm{H}}\,\mathbf{r}[t]}
{\|\hat{\mathbf{h}}\|^2}.
\label{eq:mrc_estimation}
\end{equation}

Then, the symbols following $\mathbf{R}_{\mathrm{pre}}$ are segmented into $L_{\text{code}}$-bit words, which are then passed to the \ac{LDPC} decoder. To detect the end of a command frame, the receiver uses a dedicated tail detector operating on each $L_{\text{code}}$-bit block. This component attempts to match the observed signal to the known tail sequence to identify the frame termination point. As in the case of start sequence detection, both classic \ac{GLRT}-based methods and learning-based approaches can be used for tail detection, depending on the desired trade-off between complexity and accuracy.

After completing a traversal over the entire \ac{SF}, the receiver  reconstructs the corresponding transmitted signals
and subtracts them from the received waveform, enabling \ac{SIC}. This interference cancellation step produces a cleaned version
of the received \ac{SF}. The detector and the channel decoder are then applied again to the cleaned
signal, and the procedure is iterated until no additional start-tail sequence
pairs can be detected. This iterative \ac{SIC} process progressively reduces multi-user
interference and significantly improves the probability of successful decoding in
high-load asynchronous scenarios.

The receiver architecture is summarized in Fig. \ref{fig:receiver_scheme}. Information between the \ac{LDPC} decoder and the tail detector is exchanged in the learning-based approach (dashed arrow), while, when \ac{GLRT}-based detection is employed, these two components operate independently.

\begin{figure*}[t]
  \centering
  \resizebox{\linewidth}{!}{%

\begin{tikzpicture}[
  font=\large,
  >=Latex,
  node distance=18mm and 18mm,
  block/.style={draw, rounded corners=4mm, thick, fill=blue!70,
                minimum height=18mm, minimum width=38mm,
                align=center, text=white},
  decision/.style={draw, diamond, thick, fill=blue!70,
                   aspect=2, align=center, text=white,
                   inner sep=1pt},
  terminator/.style={circle, draw, thick, fill=gray!70,
                     minimum size=6mm},
  line/.style={-Latex, thick, draw=gray!70},
  backline/.style={-Latex, thick, draw=gray!55},
  info/.style={<->, dashed, thick, draw=gray!70},
  dashedbox/.style={draw=gray!60, dashed, rounded corners=3mm, inner sep=6mm},
  labeltext/.style={align=center}
]

\node[labeltext] (rx) {Received\\signal\\[2mm]$\mathbf{R}$};

\node[block, right=20mm of rx] (sd) {Start\\detection};
\node[block, right=18mm of sd] (ce) {Channel\\estimation};
\node[block, right=18mm of ce] (pe) {Payload\\estimation};
\node[block, right=18mm of pe] (ldpc) {LDPC\\decoder};

\node[block, below=16mm of ldpc] (td) {Tail\\detection};

\node[dashedbox, fit=(ce)(pe)(ldpc)(td),
      label={[black]above:Per-command unit processing}] (burstbox) {};

\draw[line] (rx.east) -- (sd.west);
\draw[line] (sd.east) -- node[above, black, align=center] {Start\\detected} (ce.west);
\draw[line] (ce.east) -- (pe.west);
\draw[line] (pe.east) -- (ldpc.west);

\coordinate (petap) at ($(pe.south)+(0,-10mm)$);
\draw[line] (pe.south) -- (petap) -| (td.north);

\draw[info] (ldpc.south) -- node[right, black, align=center]
  {Decoder\\information} (td.north);

\coordinate (tailjoin) at (sd.south |- td.west);
\draw[backline]
  (td.west) -- node[pos=0.80, above, black] {Tail detected}
  (tailjoin) -- (sd.south);

\coordinate (failtrack) at ($(sd.south)+(0,-42mm)$);
\draw[backline] (ldpc.east) -- ++(10mm,0) |- (failtrack)
  node[pos=0.45, below, black] {Failure}
  -| (sd.south);

\node[decision, right=20mm of ldpc] (dec) {Any CU\\decoded?};

\node[block, right=22mm of dec] (sic) {SIC};

\draw[line]
  (ldpc.east) -- node[above, black, align=center]
  {End of\\superframe} (dec.west);

\draw[line] (dec.east) -- node[above, black] {Yes} (sic.west);

\node[terminator, below=18mm of dec] (end) {};
\node[labeltext, below=4mm of end] {End};
\draw[line] (dec.south) -- node[right, black] {No} (end.north);

\coordinate (sicup) at ($(sic.north)+(0,14mm)$);
\coordinate (sdtop) at ($(sd.north)+(0,14mm)$);
\draw[backline] (sic.north) -- (sicup) -- (sdtop) -- (sd.north);

\draw[thick, gray]
  (td.east) -- ++(10mm,0);

\end{tikzpicture}
  }
  \caption{Block diagram of the proposed receiver architecture.}
  \label{fig:receiver_scheme}
\end{figure*}

\section{Analysis of the effects of interference}\label{sec:the}

In this section, we characterize the probability of the most critical failure event from the receiver’s perspective. Since multiple \ac{SIC} rounds are permitted, the most detrimental situation arises when a command unit is prematurely declared as terminated due to the incorrect detection of a tail sequence. In this case, the receiver performs \ac{SIC} under an incorrect assumption on the signal support, subtracting only a truncated version of the actual transmission. As a result, the cancellation is erroneous and leaves residual energy that acts as interference in subsequent decoding stages. Such an event may occur under two distinct mechanisms:
\begin{enumerate}
\item a false alarm of the tail detector;
\item the tail sequence of an interfering command unit is observed before the
true tail sequence of the intended command unit.
\end{enumerate}
While the first event is associated with the tail detector design and can, to
some extent, be controlled through an appropriate trade-off with the
missed-detection probability, the second event is primarily driven by the
network traffic characteristics and the asynchronous nature of transmissions. As
such, it is of particular interest to characterize this latter event
analytically. In order to do so, we proceed as follows. First, we introduce the notation and formally define the event of interest, based on the notion of dangerous instants within a reference replica. Then, for a fixed activation time $\tau$, we derive the probability that a single interfering activation generates at least one replica whose tail aligns with such instants. Finally, we extend this result to the case of multiple interfering devices by modeling the activation process as Poisson and aggregating the contributions over all admissible activation times.

\subsection{Preliminary definitions and problem statement}

We denote the latest symbol time at which a device may attempt to communicate
with a receiving controller as $T_{\mathrm{act}}$. Accordingly, the \ac{SF}
length $T_{\mathrm{SF}}$ follows easily, since an entire \ac{VF} of any active
device must be accommodated. Under this assumption, the set of admissible device
activation times is
\[
\mathcal{A}\triangleq [0,T_{\mathrm{act}}-1],
\]
where
\[
T_{\mathrm{act}} = T_{\mathrm{SF}} - N_{\mathrm{slot}}L_{\max} + 1,
\]
and we require $T_{\mathrm{act}}\ge 1$ to ensure feasibility. We also define the number of symbols between the beginning of a replica and the beginning of its tail sequence, that is,
\[
d(\iota_j)\triangleq L_{\mathrm{pre}}+\iota_jL_{\mathrm{code}},
\]
where $\iota_j$ denotes the number of codewords in the replicas  of the $j$-th transmitting device.

We refer to as \emph{victim device} the transmitter whose replica is currently being decoded at the receiver and whose tail sequence is used to determine the termination of the command unit. All other simultaneously active devices are treated as potential interferers. So, let us consider a victim device $v$ activating at time $\tau_v$, whose
command unit contains $\iota_v\in[1,\iota_{\max}]$ codewords. Let
$\mathcal{R}_v$ denote the set of replicas transmitted by the
victim device. We focus on one victim replica, indexed by $r_v\in\mathcal{R}_v$, which
is transmitted in slot $s_{v,r_v}\in[1,N_{\mathrm{slot}}]$ with in-slot offset
$o_{v,r_v}$. The starting time of the tail sequence of the selected victim replica
is
\[
\tau_{\mathrm{tail}}^{(v,r_v)}
=
\tau_v+(s_{v,r_v}-1)L_{\max}+o_{v,r_v}+d(\iota_v).
\]

We define the \emph{dangerous set} as the set of instants at which the tail of an interfering device could be mistaken for that of the selected victim replica,
namely, 
\begin{align*}
\mathcal{D}_v
&\triangleq
\Big\{
\tau_v+(s_{v,r_v}-1)L_{\max}+o_{v,r_v}+d(j) \mid j\in[1, \iota_v-1]
\Big\}
\\
&=
\{\delta_1,\dots,\delta_{|\mathcal{D}|}\}
\end{align*}

Let $\mathcal{U}$ denote the set of all devices activated within the
admissible activation window $\mathcal{A}$. For a fixed victim replica
$(v,r_v)$, define the set of interfering devices as
\[
\begin{aligned}
\mathcal{U}_{\mathrm{int}}
\triangleq &
\Big\{
u \in \mathcal{U}:\;
\exists\, r_u \in \mathcal{R}_u \ \text{such that}
\\
&
[\tau^{(u,r_u)}_{\mathrm{start}},\,\tau^{(u,r_u)}_{\mathrm{end}}]
\cap
[\tau^{(v,r_v)}_{\mathrm{start}},\,\tau^{(v,r_v)}_{\mathrm{end}}]
\neq \emptyset
\Big\}.
\end{aligned}
\]
where $\tau^{(u,r_u)}_{\mathrm{start}}$ and $\tau^{(u,r_u)}_{\mathrm{end}}$ denote,
respectively, the starting and ending symbol times of the $r$-th replica
transmitted by device $u$, and $\tau^{(v,r_v)}_{\mathrm{start}}$ and
$\tau^{(v,r_v)}_{\mathrm{end}}$ are defined analogously for the selected victim
replica.

\begin{figure}[tb!]
  \centering
  \resizebox{0.95\linewidth}{!}{%
    \begin{tikzpicture}[
    x=0.9cm,
    y=0.9cm,
    >=Latex,
    every node/.style={font=\small}
]

\def\L{10}
\def\H{0.5}
\def\rwv{0.48}
\def\rwu{0.48}

\def\roffv{0.38}
\def\roffu{0.10}

\def\xv0{0.00}
\def\xu0{0.18}

\tikzset{
    vrep/.style={fill=cyan!35!blue!45},
    urep/.style={fill=orange!75!brown!70}
}

\def\yv{0}
\def\yu{-2}

\def\slotA{1}
\def\slotB{4}
\def\slotC{2}
\def\slotU{4}

\pgfmathsetmacro{\vxA}{\xv0+\slotA+\roffv}
\pgfmathsetmacro{\vxB}{\xv0+\slotB+\roffv}
\pgfmathsetmacro{\uxA}{\xu0+\slotC+\roffu}
\pgfmathsetmacro{\uxB}{\xu0+\slotU+\roffu}

\node[anchor=east] at (\xv0-0.3,\yv+0.25) {Victim $v$};

\draw (\xv0,\yv) rectangle ({\xv0+\L},\yv+\H);
\foreach \x in {1,...,9}{
    \draw ({\xv0+\x},\yv) -- ({\xv0+\x},\yv+\H);
}

\fill[vrep] (\vxA,\yv) rectangle ({\vxA+\rwv},\yv+\H);
\fill[vrep] (\vxB,\yv) rectangle ({\vxB+\rwv},\yv+\H);

\node[below] at (\xv0,\yv) {$\tau_v$};

\draw[decorate,decoration={brace,amplitude=4pt}]
(\vxA,\yv+\H+0.02) -- ({\vxA+\rwv},\yv+\H+0.02)
node[midway,above=5pt] {Replica};

\node[anchor=east] at (\xu0-0.3,\yu+0.25) {Interferer $u$};

\draw (\xu0,\yu) rectangle ({\xu0+\L},\yu+\H);
\foreach \x in {1,...,9}{
    \draw ({\xu0+\x},\yu) -- ({\xu0+\x},\yu+\H);
}

\fill[urep] (\uxA,\yu) rectangle ({\uxA+\rwu},\yu+\H);
\fill[urep] (\uxB,\yu) rectangle ({\uxB+\rwu},\yu+\H);

\node[below] at (\xu0,\yu) {$\tau_u$};

\draw[decorate,decoration={brace,amplitude=5pt}]
(\xu0,\yu+\H+0.22) -- ({\xu0+\L},\yu+\H+0.22)
node[midway,above=6pt] {Virtual Frame};

\draw[decorate,decoration={brace,mirror,amplitude=4pt}]
({\xu0+9},\yu-0.02) -- ({\xu0+10},\yu-0.02)
node[midway,below=5pt] {Slot};

\def\zx{4.42}
\def\zy{2.15}
\def\zw{2.05}
\def\zh{0.36}

\coordinate (ZL) at (\zx,\zy);
\coordinate (ZR) at ({\zx+\zw},\zy);

\draw (\vxB,\yv+\H) -- (ZL);
\draw ({\vxB+\rwv},\yv+\H) -- (ZR);

\foreach \k in {0,1,2,3,4}{
    \pgfmathsetmacro{\xleft}{\zx+\k*\zw/5}
    \pgfmathsetmacro{\xright}{\zx+(\k+1)*\zw/5}

    \ifnum\k=0
        \fill[cyan!20!blue!20] (\xleft,\zy) rectangle (\xright,{\zy+\zh});
    \else\ifnum\k=4
        \fill[cyan!50!blue!10] (\xleft,\zy) rectangle (\xright,{\zy+\zh});
    \else
        \fill[vrep] (\xleft,\zy) rectangle (\xright,{\zy+\zh});
    \fi\fi

    \draw (\xleft,\zy) rectangle (\xright,{\zy+\zh});
}

\def\zux{4.42}
\def\zuy{-3.4}

\coordinate (ZLu) at (\zux,\zuy+\zh);
\coordinate (ZRu) at ({\zux+\zw},\zuy+\zh);

\draw (\uxB,\yu) -- (ZLu);
\draw ({\uxB+\rwu},\yu) -- (ZRu);

\foreach \k in {0,1,2,3,4}{
    \pgfmathsetmacro{\xleft}{\zux+\k*\zw/5}
    \pgfmathsetmacro{\xright}{\zux+(\k+1)*\zw/5}

    \ifnum\k=0
        \fill[orange!30!yellow!20] (\xleft,\zuy) rectangle (\xright,{\zuy+\zh});
    \else\ifnum\k=4
        \fill[orange!50!brown!30] (\xleft,\zuy) rectangle (\xright,{\zuy+\zh});
    \else
        \fill[urep] (\xleft,\zuy) rectangle (\xright,{\zuy+\zh});
    \fi\fi

    \draw (\xleft,\zuy) rectangle (\xright,{\zuy+\zh});
}

\pgfmathsetmacro{\xfiveu}{\zux+4*\zw/5}

\draw[line width=1.2pt,red] (\xfiveu,\zuy) -- (\xfiveu,\zuy+\zh);

\node at ({\zux+0.5*\zw}, {\zuy-0.55}) {$\tau_{\text{tail}}^{(u,r_u)}$};
\draw[->] ({\zux+0.5*\zw}, {\zuy-0.25}) -- (\xfiveu,\zuy);

\pgfmathsetmacro{\xthree}{\zx+2*\zw/5}
\pgfmathsetmacro{\xfour}{\zx+3*\zw/5}
\pgfmathsetmacro{\xfive}{\zx+4*\zw/5}

\draw[line width=1.2pt] (\xthree,\zy) -- (\xthree,\zy+\zh);
\draw[line width=1.2pt] (\xfour,\zy) -- (\xfour,\zy+\zh);
\draw[line width=1.2pt,red] (\xfive,\zy) -- (\xfive,\zy+\zh);

\node[align=center] at (4.55,3.75) {$\mathcal{D}_v$};
\draw[->] (4.45,3.40) -- (\xthree,\zy+\zh+0.03);
\draw[->] (4.72,3.32) -- (\xfour,\zy+\zh+0.03);

\node[anchor=west] at (7.00,2.85) {$\tau_{\mathrm{tail}}^{(v,r_v)}$};
\draw[->] (6.75,2.78) -- (\xfive,\zy+\zh+0.03);

\end{tikzpicture}
  }
  \caption{Asynchronous \acp{VF} of a victim $v$ and an interferer $u$.}
  \label{fig:help}
\end{figure}

Let $u\in\mathcal{U}_{\mathrm{int}}$ denote an interfering device.  For each $\delta\in\mathcal{D}_v$,
define the random variable
\[
N_\delta
\triangleq
\sum_{u\in\mathcal{U}_{\mathrm{int}}}
\ \sum_{r_u\in\mathcal{R}_u}
\llbracket \tau_{\mathrm{tail}}^{(u,r_u)}=\delta\rrbracket,
\]
where $\tau_{\mathrm{tail}}^{(u,r_u)}$ denotes the starting time of the tail
sequence of the $r$-th replica transmitted by the $u$-th interfering device and  $\llbracket \cdot \rrbracket$ denotes the Iverson bracket, defined in Section \ref{subsec:not}.

For the interfering device $u$, the activation
time is $\tau_u$, and $\iota_u\in[1,\iota_{\max}]$ denotes the
number of codewords in the replicas it transmits. Given the $r$-th replica,  the slot index is $s_{u,r}\in[1,N_{\mathrm{slot}}]$, and 
the in-slot offset is $o_{u,r}$. The starting
time of the tail sequence of the $r$-th replica of device $u$ is therefore
\[
\tau_{\mathrm{tail}}^{(u,r_u)}
=
\tau_u+(s_{u,r}-1)L_{\max}+o_{u,r}+d(\iota_u).
\]

A \emph{tail confusion} event occurs if there exists at least one interfering
device (say the $u$-th) producing a replica (say $r_u$) whose tail starting
time is $\tau_{\mathrm{tail}}^{(u,r_u)}$, and this time coincides with a dangerous instant of the
selected victim replica. The structure of the dangerous set and its relation to interfering replicas
is illustrated in Fig.~\ref{fig:help}. The upper part of the figure shows a
selected replica of the victim device $v$, together with the dangerous
instants $\mathcal{D}_v$ within the replica. The lower part depicts an
interfering device $u$. The enlarged parts highlight the possible alignment between the dangerous instants of
the victim replica and the tail position of the interferer: if $\tau_{\mathrm{tail}}^{(u,r_u)} \in \mathcal{D}_v$, we have a tail confusion event. Formally, we define
\[
H
\triangleq
\Big\{
\exists\, (u,r_u)\ \text{with } u\in\mathcal{U}_{\mathrm{int}},\ r\in\mathcal{R}_u
:\ \tau_{\mathrm{tail}}^{(u,r_u)} \in \mathcal{D}_v
\Big\}.
\]
Equivalently, we have
\[
H
=
\Big\{
\exists\,\delta\in\mathcal{D}_v:\ N_\delta \ge 1
\Big\},
\]
and
\[
\mathbb{P}(H)
=
1-\mathbb{P}\Big(
\forall\,\delta\in\mathcal{D}_v,\ N_\delta=0
\Big).
\]

In the following, we will derive an expression for $\mathbb{P}(H)$.

\subsection{Dangerous-activation probability for a fixed activation time}

Let us fix the dangerous set $\mathcal{D}_v$ and the selected victim replica $(v,r_v)$. Let the selected victim replica start at
\[
\tau_{\mathrm{start}}^{(v,r_v)}\triangleq \tau_v+(s_{v,r_v}-1)L_{\max}+o_{v,r_v},
\]
and let us define its (slot-budget) occupancy interval as
\[
I_v \triangleq \big[\tau_{\mathrm{start}}^{(v,r_v)},\ \tau_{\mathrm{start}}^{(v,r_v)}+L_{\max}-1\big].
\]
An activation at time $\tau\in\mathcal{A}$ can generate an overlapping replica
with $(v,r_v)$ only if it belongs to the set
\[
\mathcal{A}_{\mathrm{int}}
\triangleq
\Big\{\tau\in\mathcal{A}:\ \exists\, s\in[1,N_{\mathrm{slot}}],\ \exists\, o,\ \exists\,\iota
\text{ such that }\]
\[
\big[\tau+(s-1)L_{\max}+o,\ \tau+sL_{\max}-1\big]\cap I_v\neq\emptyset\Big\}.
\]
Accordingly, in the following we only consider $\tau\in\mathcal{A}_{\mathrm{int}}$.

Let us fix an activation time $\tau_u\in\mathcal{A}_{\mathrm{int}}$ and define the event
that at least one of the $N_{\mathrm{rep}}$ replicas generated by such activation
produces a tail starting at a dangerous instant of the selected victim replica as
\begin{equation}
E_{\mathcal{D}}(\tau_u)\triangleq
\Big\{
\exists\, r\in\mathcal{R}_u:\ \tau_{\mathrm{tail}}^{(u,r_u)}\in\mathcal{D}_v
\Big\}.
\label{eq:edtu}
\end{equation}

Let us fix $\iota\in[1,\iota_{\max}]$, the activation time $\tau_u$, and the slot index $s_{u,r}=s\in[1,N_{\mathrm{slot}}]$. Under these conditions, the in-slot offset $o_{u,r}$ is uniformly distributed over the integer interval
\[
[0,O_{\max}]=[0,\,L_{\max}-d(\iota)-L_{\mathrm{tail}}],
\]
and offset realizations are independent across replicas. Moreover, the $N_{\mathrm{rep}}$ slots are selected uniformly at random without replacement from $[1,N_{\mathrm{slot}}]$.

\begin{Lem}\label{lem:pdelta}
Fix $\tau_u\in\mathcal{A}_{\mathrm{int}}$, a slot index $s\in[1,N_{\mathrm{slot}}]$,
and a command-unit length $\iota\in[1,\iota_{\max}]$. Under the uniform-offset
assumption, for any integer time $\delta$,
\begin{align}
p_{\delta}&(\tau_u,s,\iota)
\triangleq
\mathbb{P}\!\left(
\tau_{\mathrm{tail}}^{(u,r_u)}=\delta
\ \middle|\ 
\tau_u,\ s_{u,r}=s,\ \iota_u=\iota
\right)\nonumber\\
&=
\frac{
\llbracket
0\le \delta-\tau_u-(s-1)L_{\max}-d(\iota)
\le O_{\max}
\rrbracket
}{
L_{\max}-d(\iota)-L_{\mathrm{tail}}+1
}.
\label{eq:pdelta_lemma}
\end{align}
\end{Lem}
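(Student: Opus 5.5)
The argument is a direct change of variables on the in-slot offset. Conditioned on $\tau_u$, $s_{u,r}=s$ and $\iota_u=\iota$, the tail starting time of replica $r_u$ is
\[
\tau_{\mathrm{tail}}^{(u,r_u)}=\tau_u+(s-1)L_{\max}+d(\iota)+o_{u,r},
\]
where the first three terms are deterministic. The event $\{\tau_{\mathrm{tail}}^{(u,r_u)}=\delta\}$ is therefore the same as the event $\{o_{u,r}=\delta-\tau_u-(s-1)L_{\max}-d(\iota)\}$. Call this integer $m$.

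By the uniform-offset assumption stated just before the lemma, $o_{u,r}$ is uniform on $[0,O_{\max}]$ with $O_{\max}=L_{\max}-d(\iota)-L_{\mathrm{tail}}$. This interval has $O_{\max}+1=L_{\max}-d(\iota)-L_{\mathrm{tail}}+1$ points. We need this count to be at least one, i.e.\ $d(\iota)+L_{\mathrm{tail}}\le L_{\max}$. That holds because the command unit must fit in a slot, and the placement rule picks the offset uniformly in $[0,L_{\max}-L]$ with $L=d(\iota)+L_{\mathrm{tail}}$. This is the only point that needs care: it confirms that the slot-placement model of the system description and the offset range used in the lemma coincide, so the normalisation is the right one.

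The pmf of $o_{u,r}$ is then $\llbracket 0\le m\le O_{\max}\rrbracket/(O_{\max}+1)$. Substituting $m$ gives exactly \eqref{eq:pdelta_lemma}.

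Two remarks close the argument. First, the conditioning on $\tau_u\in\mathcal{A}_{\mathrm{int}}$ does not affect the distribution of the offset, because offsets are drawn independently of activation times and slot choices. Second, the slot selection without replacement across replicas plays no role here, since only one replica with a fixed slot is involved.
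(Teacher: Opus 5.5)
Your proof is correct and follows the paper's argument: both reduce $\{\tau_{\mathrm{tail}}^{(u,r_u)}=\delta\}$ to the offset taking the single value $\delta-\tau_u-(s-1)L_{\max}-d(\iota)$ and then read off the uniform pmf on $[0,O_{\max}]$. Your extra check that $O_{\max}\ge 0$, i.e.\ $d(\iota)+L_{\mathrm{tail}}\le L_{\max}$, is left implicit in the paper's uniform-offset assumption. It holds when $L_{\max}$ is read as the maximum full command-unit length, as in the simulations ($896=128+5\cdot 128+128$), but not under the literal definition $\iota_{\max}=L_{\max}/L_{\mathrm{code}}$ with $\iota=\iota_{\max}$.
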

\begin{IEEEproof}
Fix $(\tau_u,s_{u,r}=s,\iota_u=\iota)$. 
From the definition of tail starting time we have
\[
\tau_{\mathrm{tail}}^{(u,r_u)}
=
\tau_u+(s-1)L_{\max}+o_{u,r}+d(\iota).
\]
Hence the event $\{\tau_{\mathrm{tail}}^{(u,r_u)}=\delta\}$ occurs if and only if
\[
o_{u,r}
=
\delta-\tau_u-(s-1)L_{\max}-d(\iota)
\triangleq o^\star .
\]
Therefore
\[
\mathbb{P}\!\left(
\tau_{\mathrm{tail}}^{(u,r_u)}=\delta
\ \middle|\
\tau_u,\ s_{u,r}=s,\ \iota_u=\iota
\right)
=
\mathbb{P}(o_{u,r}=o^\star).
\]

By assumption,  the offset $o_{u,r}$ is uniformly
distributed over the integer set
$$[0,O_{\max}],
\qquad
O_{\max}=L_{\max}-d(\iota)-L_{\mathrm{tail}}.
$$
Hence the number of admissible offset values is $O_{\max}+1$, and
\[
\mathbb{P}(o_{u,r}=k)=\frac{1}{O_{\max}+1},
\qquad
k\in[0,O_{\max}].
\]

If $o^\star\notin[0,O_{\max}]$, the event is impossible and the probability is
zero. Otherwise, if $o^\star\in[0,O_{\max}]$, exactly one offset value produces
$\tau_{\mathrm{tail}}^{(u,r_u)}=\delta$, yielding
\[
\mathbb{P}(o_{u,r}=o^\star)=\frac{1}{O_{\max}+1}.
\]

Combining the two cases and recalling that
$O_{\max}=L_{\max}-d(\iota)-L_{\mathrm{tail}}$ yields  \eqref{eq:pdelta_lemma}.
\end{IEEEproof}

\begin{Cor}\label{cor:pD}
For fixed $(\tau_u,s,\iota)$, the per-slot probability that a replica transmitted
in slot $s$ hits the dangerous set $\mathcal{D}_v$ satisfies
\begin{align}
p_{\mathcal{D}}(\tau_u,s,\iota)
&\triangleq
\mathbb{P}\!\left(
\tau_{\mathrm{tail}}^{(u,r_u)}\in\mathcal{D}_v
\ \middle|\ 
\tau_u,\ s_{u,r}=s,\ \iota_u=\iota
\right)\nonumber\\
&=
\sum_{\delta\in\mathcal{D}_v} p_{\delta}(\tau_u,s,\iota),
\label{eq:pD_cor}
\end{align}
with $p_{\delta}(\tau_u,s,\iota)$ given by Lemma~\ref{lem:pdelta}.
\end{Cor}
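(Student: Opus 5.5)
The corollary is a direct consequence of Lemma~\ref{lem:pdelta}: we write the event $\{\tau_{\mathrm{tail}}^{(u,r_u)}\in\mathcal{D}_v\}$ as a disjoint union of point events and then use finite additivity of the conditional probability measure. Throughout, we condition on $(\tau_u,\ s_{u,r}=s,\ \iota_u=\iota)$. Under this conditioning, the only remaining randomness in $\tau_{\mathrm{tail}}^{(u,r_u)}$ comes from the in-slot offset $o_{u,r}$.

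First, we note that $\mathcal{D}_v$ is a finite set of integer time instants $\{\delta_1,\dots,\delta_{|\mathcal{D}|}\}$. It has at most $\iota_v-1$ elements, and it is fixed because the victim replica $(v,r_v)$ and its parameters are held fixed. Its elements are distinct, since $d(j)=L_{\mathrm{pre}}+jL_{\mathrm{code}}$ is strictly increasing in $j$; in any case, the argument below only uses $\mathcal{D}_v$ as a set, so repetitions would not matter. We can therefore write
\[
\{\tau_{\mathrm{tail}}^{(u,r_u)}\in\mathcal{D}_v\}=\bigcup_{\delta\in\mathcal{D}_v}\{\tau_{\mathrm{tail}}^{(u,r_u)}=\delta\}.
\]
The events in this union are pairwise disjoint, because $\tau_{\mathrm{tail}}^{(u,r_u)}$ takes a single integer value in each realization.

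Second, we apply finite additivity of $\mathbb{P}(\cdot\mid\tau_u,s_{u,r}=s,\iota_u=\iota)$ to this disjoint union. This expresses the conditional probability of the union as
\[
\sum_{\delta\in\mathcal{D}_v}\mathbb{P}\!\left(\tau_{\mathrm{tail}}^{(u,r_u)}=\delta\mid\tau_u,s_{u,r}=s,\iota_u=\iota\right).
\]
By the definition in Lemma~\ref{lem:pdelta}, each summand is exactly $p_\delta(\tau_u,s,\iota)$. Substituting \eqref{eq:pdelta_lemma} then gives \eqref{eq:pD_cor}.

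The only step that needs any care is the disjointness of the events. It holds because, under the conditioning, $\tau_{\mathrm{tail}}^{(u,r_u)}$ is an affine function of the single random variable $o_{u,r}$. Beyond that, no real obstacle is expected. If desired, one can also observe that the sum equals $|\{\delta\in\mathcal{D}_v: o^\star(\delta)\in[0,O_{\max}]\}|/(O_{\max}+1)$, which makes clear that the result is a valid probability.
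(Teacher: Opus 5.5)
Your proposal is correct and matches the paper's proof: in both, the events $\{\tau_{\mathrm{tail}}^{(u,r_u)}=\delta\}$ are mutually exclusive over distinct $\delta\in\mathcal{D}_v$, so the conditional probability of membership in $\mathcal{D}_v$ is the sum of the point probabilities $p_\delta(\tau_u,s,\iota)$ from Lemma~\ref{lem:pdelta}. Your additional remarks are correct but not needed for the argument. These are the distinctness of the elements of $\mathcal{D}_v$, the affine dependence on $o_{u,r}$, and the observation that the sum is at most one.
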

\begin{IEEEproof}
Since $\tau_{\mathrm{tail}}^{(u,r_u)}$ takes a single integer value, the events
$\{\tau_{\mathrm{tail}}^{(u,r_u)}=\delta\}$ are mutually exclusive over distinct
$\delta\in\mathcal{D}_v$, hence the probability of membership in $\mathcal{D}_v$
equals the sum of the point probabilities.
\end{IEEEproof}

\begin{The}\label{thm:nohit}
Fix $\tau_u\in\mathcal{A}_{\mathrm{int}}$ and $\iota\in[1,\iota_{\max}]$.  Then the
probability that none of the $N_{\mathrm{rep}}$ replicas has its tail starting time in
$\mathcal{D}_v$ equals
\begin{align}
\mathbb{P}(\textnormal{no hit in }\mathcal{D}_v \mid \tau_u,\iota)
&=\nonumber\\
\frac{1}{\binom{N_{\mathrm{slot}}}{N_{\mathrm{rep}}}}
\sum_{\substack{\mathcal{S}\subseteq[1,N_{\mathrm{slot}}]\\|\mathcal{S}|=N_{\mathrm{rep}}}}
& \prod_{s\in\mathcal{S}}\big(1-p_{\mathcal{D}}(\tau_u,s,\iota)\big).
\label{eq:nohit_thm}
\end{align}
\end{The}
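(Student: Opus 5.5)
The plan is to condition on the random set of slots chosen by the interfering activation and then use the law of total probability. Let $\mathcal{S}_u\subseteq[1,N_{\mathrm{slot}}]$ denote the (random) set of $N_{\mathrm{rep}}$ slots selected by device $u$. By the model assumptions, $\mathcal{S}_u$ is uniformly distributed over the $\binom{N_{\mathrm{slot}}}{N_{\mathrm{rep}}}$ subsets of cardinality $N_{\mathrm{rep}}$. Hence
\[
\mathbb{P}(\textnormal{no hit}\mid\tau_u,\iota)=\frac{1}{\binom{N_{\mathrm{slot}}}{N_{\mathrm{rep}}}}\sum_{|\mathcal{S}|=N_{\mathrm{rep}}}\mathbb{P}(\textnormal{no hit}\mid\tau_u,\iota,\mathcal{S}_u=\mathcal{S}).
\]
This requires, and I will state explicitly, that the slot choice is independent of $\tau_u$ and of $\iota_u$.

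Next I fix $\mathcal{S}$. There is then a one-to-one correspondence between replicas and slots $s\in\mathcal{S}$, and the event ``no hit'' is the intersection over $s\in\mathcal{S}$ of the events $\{\tau_{\mathrm{tail}}^{(u,r)}\notin\mathcal{D}_v\}$, where $r$ is the replica placed in slot $s$. Conditionally on $(\tau_u,\iota,\mathcal{S})$, the tail time of the replica in slot $s$ is a deterministic function of its offset $o_{u,r}$, since $\tau_{\mathrm{tail}}^{(u,r)}=\tau_u+(s-1)L_{\max}+o_{u,r}+d(\iota)$. The offsets are independent across replicas and each is uniform on $[0,O_{\max}]$, with the same $\iota$ for all replicas of the device. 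Therefore the events above are conditionally independent, and the probability of their intersection factorizes into a product over $s\in\mathcal{S}$.

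Each factor equals $1-\mathbb{P}(\tau_{\mathrm{tail}}^{(u,r)}\in\mathcal{D}_v\mid\tau_u,s,\iota)=1-p_{\mathcal{D}}(\tau_u,s,\iota)$ by Corollary~\ref{cor:pD}, whose value comes from Lemma~\ref{lem:pdelta}. Substituting these factors into the total-probability sum gives \eqref{eq:nohit_thm}.

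The main point requiring care is the conditional-independence step. The offset distribution depends only on $\iota$, which is shared by all replicas of the device, and not on the slot assignment. I must also check that conditioning on $\mathcal{S}_u=\mathcal{S}$ does not change the marginal law of each offset, so that Corollary~\ref{cor:pD} (stated for a single slot) applies unchanged. Both facts follow from the modeling assumption that offsets are drawn independently of slot selection. Finally, because slots are drawn without replacement, the replicas occupy distinct slots, so the product runs over exactly $N_{\mathrm{rep}}$ distinct factors and no slot is counted twice.
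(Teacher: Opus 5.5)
Your proposal is correct and follows essentially the same route as the paper: condition on the uniformly distributed slot subset, factorize the no-hit probability using the independence of offsets across replicas (with each factor given by Corollary~\ref{cor:pD}), and average over the $\binom{N_{\mathrm{slot}}}{N_{\mathrm{rep}}}$ equiprobable subsets. Your remarks that slot selection is independent of $\tau_u$, $\iota_u$ and the offsets state explicitly assumptions the paper uses only implicitly.
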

\begin{IEEEproof}
Let $S$ denote the random subset of slots chosen by the activation, where $S$ is
uniform over all $\binom{N_{\mathrm{slot}}}{N_{\mathrm{rep}}}$ subsets of size
$N_{\mathrm{rep}}$. Let $\mathcal{S}$ be a fixed realization of $S$. For each $s\in\mathcal{S}$ there is exactly one replica transmitted in slot
$s$, and $p_{\mathcal{D}}(\tau_u,s,\iota)$ is the probability that this replica
hits $\mathcal{D}_v$. Since offsets are independent across replicas, the
corresponding ``no-hit'' events are independent for all $s\in\mathcal{S}$, hence
\[
\mathbb{P}(\text{no hit in }\mathcal{D}_v \mid \tau_u,\iota,S=\mathcal{S})
=
\prod_{s\in\mathcal{S}}\big(1-p_{\mathcal{D}}(\tau_u,s,\iota)\big).
\]
Averaging over all possible realizations of $S$, which are equiprobable, yields
\eqref{eq:nohit_thm}.
\end{IEEEproof}

Fix $\tau_u\in\mathcal{A}_{\mathrm{int}}$ and $\iota\in[1,\iota_{\max}]$.  Then
\begin{align}
q_{\mathcal{D}}(\tau_u\mid \iota)
&\triangleq
\mathbb{P}\!\left(E_{\mathcal{D}}(\tau_u)\mid \tau_u,\iota\right)\nonumber\\
&=
1-\mathbb{P}(\textnormal{no hit in }\mathcal{D}_v\mid \tau_u,\iota),
\label{eq:qD_cor}
\end{align}
where $E_{\mathcal{D}}(\tau_u)$ is defined in \eqref{eq:edtu} and $\mathbb{P}(\textnormal{no hit in }\mathcal{D}_v\mid \tau_u,\iota)$ is given
by Theorem~\ref{thm:nohit}. Also, assume $\iota_u\sim\mathrm{Unif}[1,\iota_{\max}]$ independently of $\tau_u$. Then
\[
q_{\mathcal{D}}(\tau_u)
\triangleq
\mathbb{P}\!\left(E_{\mathcal{D}}(\tau_u)\mid \tau_u\right)
=
\frac{1}{\iota_{\max}}
\sum_{\iota=1}^{\iota_{\max}} q_{\mathcal{D}}(\tau_u\mid \iota).
\]

The quantity $q_{\mathcal{D}}(\tau)$ characterizes the probability that a single
activation at time $\tau$ generates at least one replica whose tail hits
$\mathcal{D}_v$. The tail confusion event $H$ occurs if at least one such
``dangerous'' activation takes place at some $\tau\in\mathcal{A}_{\mathrm{int}}$. In the following section, we compute $\mathbb{P}(H)$ under a Poisson activation model.

\subsection{Tail confusion probability under Poisson activations}

\begin{The}\label{thm:thinning}
Assume that, for each symbol time $\tau\in\mathcal{A}_{\mathrm{int}}$, the number
of interfering device activations $M_\tau$ is Poisson distributed with mean $\lambda$,
independently across $\tau$. 
Let us define the number of ``dangerous'' activations at time $\tau$ as
\begin{equation}
C_\tau \triangleq \sum_{\ell=1}^{M_\tau} \llbracket E_{\mathcal{D}}^{(\ell)}(\tau)\rrbracket,
\label{eq:dangact}
\end{equation}
where, conditioned on $\tau$, the events $\{E_{\mathcal{D}}^{(\ell)}(\tau)\}$ are
i.i.d. Bernoulli variables with parameter $q_{\mathcal{D}}(\tau)$ and are independent of
$M_\tau$. Then
\[
C_\tau \sim \mathrm{Poisson} \big(\lambda q_{\mathcal{D}}(\tau)\big),
\]
and the random variables $\{C_\tau\}_{\tau\in\mathcal{A}_{\mathrm{int}}}$ are
independent across $\tau$.
\end{The}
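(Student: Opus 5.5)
This is the classical Poisson thinning argument, and I will prove it through probability generating functions. Fix $\tau\in\mathcal{A}_{\mathrm{int}}$ and write $q\triangleq q_{\mathcal{D}}(\tau)$, $B_\ell\triangleq\llbracket E_{\mathcal{D}}^{(\ell)}(\tau)\rrbracket$. The computation is a conditioning step. Given $M_\tau=m$, the sum $C_\tau=\sum_{\ell=1}^{m}B_\ell$ is a sum of $m$ i.i.d. Bernoulli($q$) variables, and this uses the assumed independence of the $B_\ell$ from $M_\tau$. Hence
\[
\mathbb{E}\big[z^{C_\tau}\mid M_\tau=m\big]=(1-q+qz)^m .
\]
Averaging over $M_\tau\sim\mathrm{Poisson}(\lambda)$, and using the \ac{PGF} of a Poisson variable $\mathbb{E}[y^{M_\tau}]=e^{\lambda(y-1)}$ evaluated at $y=1-q+qz$, gives
\[
\mathbb{E}\big[z^{C_\tau}\big]=e^{\lambda(1-q+qz-1)}=e^{\lambda q(z-1)} .
\]
This is the \ac{PGF} of a Poisson($\lambda q$) variable, and uniqueness of the \ac{PGF} for nonnegative integer variables gives the first claim.

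As a cross-check, and as an alternative that avoids generating functions, I can compute directly
\[
\mathbb{P}(C_\tau=k)=\sum_{m\ge k}e^{-\lambda}\frac{\lambda^m}{m!}\binom{m}{k}q^k(1-q)^{m-k}.
\]
Substituting $j=m-k$ and summing the exponential series gives $e^{-\lambda q}(\lambda q)^k/k!$.

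For independence across $\tau$, I need to state explicitly the modelling assumption implicit in the theorem. The whole collection at time $\tau$, namely $M_\tau$ together with the marks $\{B_\ell\}$ (the slot choices, offsets and $\iota$ of each activation), is generated independently from the collection at any other $\tau'$. Devices activate via independent Bernoulli/Poisson processes per symbol, and marks are drawn afresh at each activation. Under this assumption, $C_\tau$ is a measurable function of the data attached to time $\tau$ only. Functions of independent families are independent, so $\{C_\tau\}_{\tau\in\mathcal{A}_{\mathrm{int}}}$ are mutually independent.

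The main obstacle is this second part. The theorem states independence of the $M_\tau$ across $\tau$, but it does not literally state that the marks at different times are independent of each other and of the other counts. I will make that explicit, arguing from the system model's statement that slot selection and offsets are drawn independently per activation. The distributional part is routine.
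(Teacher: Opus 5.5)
Your proof is correct and follows the paper's route: the marginal law comes from the same PGF composition $G_{M_\tau}(G_B(z))=e^{\lambda q_{\mathcal{D}}(\tau)(z-1)}$. For independence, the paper conditions on $M_{\tau_1},\ldots,M_{\tau_n}$ and factorizes the joint PGF, which is just a computational version of your ``functions of independent families'' argument.

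Both proofs rely on the same unstated assumption: the marks at distinct times are mutually independent of each other and of the counts at other times. The paper uses it implicitly when it says each $C_{\tau_k}$ is built from Bernoulli variables ``associated exclusively'' with $\tau_k$, and you are right to state it explicitly.
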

\begin{IEEEproof}
Fix $\tau\in\mathcal{A}_{\mathrm{int}}$ and let $G_{C_\tau}(s)=\mathbb{E}[s^{C_\tau}]$
denote the \ac{PGF} of $C_\tau$. We have
\[
  G_{M_\tau}(s)=\exp\bigl(\lambda(s-1)\bigr),
  \qquad
  G_B(s)=1+q_{\mathcal{D}}(\tau)(s-1),
\]
which are the \ac{PGF}s of a $\mathrm{Poisson}(\lambda)$ and a
$\mathrm{Bernoulli}(q_{\mathcal{D}}(\tau))$ random variable, respectively.
By the binomial structure induced by~\eqref{eq:dangact},
$\mathbb{E}[s^{C_\tau}\mid M_\tau]=G_B(s)^{M_\tau}$,
so by the law of total expectation and the \ac{PGF} composition formula,
\[
  G_{C_\tau}(s)
  =G_{M_\tau}\!\bigl(G_B(s)\bigr)
  =\exp\!\left(\lambda q_{\mathcal{D}}(\tau)(s-1)\right),
\]
which is the \ac{PGF} of a $\mathrm{Poisson}(\lambda q_{\mathcal{D}}(\tau))$
random variable, so $C_\tau\sim\mathrm{Poisson}(\lambda q_{\mathcal{D}}(\tau))$.

It remains to prove mutual independence. Let
$\tau_1,\ldots,\tau_n\in\mathcal{A}_{\mathrm{int}}$ be distinct. Again by the law of total expectation,
\[
  \mathbb{E}\!\left[\prod_{k=1}^n s_k^{C_{\tau_k}}\right]
  =\mathbb{E}\!\left[
     \mathbb{E}\!\left[
       \prod_{k=1}^n s_k^{C_{\tau_k}}
       \,\middle|\,
       M_{\tau_1},\ldots,M_{\tau_n}
     \right]
   \right].
\]
Conditioned on $M_{\tau_1},\ldots,M_{\tau_n}$, the counts $C_{\tau_1},\ldots,C_{\tau_n}$
are independent, since each $C_{\tau_k}$ is formed from Bernoulli random variables
associated exclusively with symbol time $\tau_k$. The inner expectation therefore
factorizes, and since $M_{\tau_1},\ldots,M_{\tau_n}$ are independent, taking the
outer expectation gives
\begin{align*}
  \mathbb{E}\!\left[\prod_{k=1}^n s_k^{C_{\tau_k}}\right]
  &=\prod_{k=1}^n
    \mathbb{E}\!\left[
      G_B(s_k)^{M_{\tau_k}}
    \right]
  =\prod_{k=1}^n G_{C_{\tau_k}}(s_k).
\end{align*}
The joint \ac{PGF} factors into the product of the marginals, so
$\{C_\tau\}_{\tau\in\mathcal{A}_{\mathrm{int}}}$ are mutually independent.
\end{IEEEproof}

\begin{The}\label{thm:PH}
Let us assume that, for each symbol time $\tau\in\mathcal{A}_{\mathrm{int}}$, the number
of device activations $M_\tau$ is Poisson distributed with mean $\lambda$,
independently across $\tau$. Let us define the total number of dangerous activations over
$\mathcal{A}_{\mathrm{int}}$ as
\[
C \triangleq \sum_{\tau\in\mathcal{A}_{\mathrm{int}}} C_\tau,
\]
where $C_\tau$ is the number of dangerous activations at time $\tau$. Then $C$ is
Poisson distributed with mean
\[
\Lambda_H \triangleq \lambda\sum_{\tau\in\mathcal{A}_{\mathrm{int}}}
q_{\mathcal{D}}(\tau).
\]
Moreover, the tail confusion event $H$ occurs if and only if $C\ge 1$, with probability
\[
\mathbb{P}(H)
=
1-\exp\left(-\lambda\sum_{\tau\in\mathcal{A}_{\mathrm{int}}}
q_{\mathcal{D}}(\tau)\right).
\]
\end{The}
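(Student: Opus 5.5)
The plan is to build directly on Theorem~\ref{thm:thinning}. That result gives, for each $\tau\in\mathcal{A}_{\mathrm{int}}$, $C_\tau\sim\mathrm{Poisson}(\lambda q_{\mathcal{D}}(\tau))$, and it states that the family $\{C_\tau\}$ is mutually independent. Since $\mathcal{A}_{\mathrm{int}}\subseteq\mathcal{A}$ is a finite set of integers, $C$ is a finite sum of independent Poisson variables.

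I would compute its \ac{PGF} by factorization, using the independence from Theorem~\ref{thm:thinning}:
\[
G_C(s)=\mathbb{E}\Big[\prod_{\tau\in\mathcal{A}_{\mathrm{int}}} s^{C_\tau}\Big]
=\prod_{\tau\in\mathcal{A}_{\mathrm{int}}}\exp\big(\lambda q_{\mathcal{D}}(\tau)(s-1)\big)
=\exp\big(\Lambda_H(s-1)\big).
\]
This identifies $C\sim\mathrm{Poisson}(\Lambda_H)$, which is the first claim.

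For the second claim, I would show the set identity $H=\{C\ge 1\}$ in two directions.

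\emph{From $C\ge1$ to $H$.} If $C\ge 1$, then some $C_\tau\ge1$. Hence some activation $\ell$ at time $\tau$ realizes $E^{(\ell)}_{\mathcal{D}}(\tau)$, so one of its replicas has tail starting time in $\mathcal{D}_v$. This replica overlaps the victim replica, because its tail lies inside $I_v$. Its device therefore belongs to $\mathcal{U}_{\mathrm{int}}$, and $H$ occurs.

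\emph{From $H$ to $C\ge1$.} If $H$ occurs, take $u\in\mathcal{U}_{\mathrm{int}}$ and $r_u$ with $\tau_{\mathrm{tail}}^{(u,r_u)}\in\mathcal{D}_v$. I must argue that the activation time $\tau_u$ lies in $\mathcal{A}_{\mathrm{int}}$: the replica's slot interval intersects $I_v$ (its tail start is in $\mathcal{D}_v\subset I_v$), and this is exactly the defining condition of $\mathcal{A}_{\mathrm{int}}$. That activation then satisfies $E_{\mathcal{D}}(\tau_u)$, which contributes at least one to $C_{\tau_u}$, so $C\ge1$.

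Having the identity, $\mathbb{P}(H)=1-\mathbb{P}(C=0)=1-e^{-\Lambda_H}$, which is the stated formula.

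The main obstacle is the event identity. One subtlety is that the victim device itself must be excluded from the activation counts, consistent with ``interfering device activations'' in the hypothesis. A second is that $q_{\mathcal{D}}(\tau)$, and hence the independence in Theorem~\ref{thm:thinning}, must be taken with $\iota_u$ averaged as in the definition of $q_{\mathcal{D}}(\tau)$. I would state both points explicitly and otherwise rely on Theorem~\ref{thm:thinning} for all distributional facts.
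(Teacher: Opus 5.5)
Your proposal is correct and follows the paper's proof. Theorem~\ref{thm:thinning} gives independent $C_\tau\sim\mathrm{Poisson}(\lambda q_{\mathcal{D}}(\tau))$, so the finite sum is Poisson with mean $\Lambda_H$; your PGF factorization just makes explicit the standard fact the paper cites. Then $\mathbb{P}(H)=1-\mathbb{P}(C=0)$ once $H=\{C\ge 1\}$, which the paper takes as immediate from the definitions, whereas you check both inclusions. In that check, justify the overlap with the victim replica by noting that every $\delta\in\mathcal{D}_v$ lies in the victim's actual support $[\tau^{(v,r_v)}_{\mathrm{start}},\tau^{(v,r_v)}_{\mathrm{end}}]$ (since $d(j)<d(\iota_v)$), not merely in $I_v$, because $\mathcal{U}_{\mathrm{int}}$ is defined through the actual replica intervals.
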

\begin{IEEEproof}
By Theorem~\ref{thm:thinning}, for each $\tau\in\mathcal{A}_{\mathrm{int}}$ the
number of dangerous activations $C_\tau$ is Poisson distributed with mean
$\lambda q_{\mathcal{D}}(\tau)$, and the random variables $\{C_\tau\}$ are
independent across $\tau$. Hence, their sum
\[
C=\sum_{\tau\in\mathcal{A}_{\mathrm{int}}} C_\tau
\]
is Poisson distributed with mean
$\Lambda_H=\lambda\sum_{\tau\in\mathcal{A}_{\mathrm{int}}} q_{\mathcal{D}}(\tau)$.

By definition, a tail confusion occurs if and only if at least one dangerous
activation is present, i.e., if $C\ge 1$. Since $C$ is Poisson distributed,
\[
\mathbb{P}(H)=\mathbb{P}(C\ge 1)=1-\mathbb{P}(C=0)=1-e^{-\Lambda_H},
\]
which yields the stated result.
\end{IEEEproof}

\begin{table}[t]
   \caption{CNN complexity.}
  \label{tab:CNN_compl}
  \centering
  \scriptsize
    \begin{tabular}{l|l|l|l}
    \toprule
    \textbf{Layer type} & \textbf{Layer size} & \textbf{Activation} & \textbf{FLOPs} \\
    \midrule
    Input$^*$ $\bm Y_1$ & $3 \times 2 L_{\text{pre}}$ & & \\
    2D convolutional & $8$@$2\times 7$ & ReLu & 116,000 \\
    Fully connected & $130$ & ReLu & 1,040,000 \\
    Fully connected & $65$ & ReLu & 16,965\\
    \midrule
    Input$^*$ $\bm Y_2$ & $(i_{max}+1) \times L_{\text{tail}}$ & & \\
    Fully connected & $130$ & ReLu & 699,010 \\
    Fully connected & $65$ & ReLu & 16,965\\
    \midrule
    Fully connected & $130$ & ReLu & 33,930 \\
    Fully connected & $65$ & ReLu & 16,965\\
    Fully connected & $2$ & ReLu & 262\\
    Output & $2$ & Sigmoid & 8 \\
    
    \bottomrule
    \end{tabular}

    \vspace{1mm}
    \begin{tabular}{p{0.9\linewidth}}
        \footnotesize $^*$ The input dimensions depend on the lengths of the preamble and tail sequences, which are assumed to be equal in our framework, i.e., $L_{\text{pre}} = L_{\text{tail}}$.
    \end{tabular}
\end{table}

\section{Boundary detection}\label{sec:bound}

We propose a unified \ac{DL}-based framework for the joint detection of start and tail sequences, based solely on the received signal. Both detection tasks are handled by a single \ac{NN}, relying only on signal-level features for start sequence detection, while also leveraging decoder information for tail detection. The proposed solution adopts a multi-branch, multi-label \ac{CNN} architecture. As illustrated in Fig.~\ref{fig:CNNarch}, the network consists of two parallel branches processing the inputs $\bm{Y}_1$ and $\bm{Y}_2$, respectively, which are subsequently merged into a shared fully connected stage. This final stage comprises three fully connected layers with \ac{ReLU} activation functions and $130$, $65$, and $2$ neurons, respectively. 
The final layer of the network utilizes a sigmoid activation function, allowing each output to independently represent the probability of the presence of the start ($A$) and tail ($B$) sequences. This configuration supports multi-label classification, as the labels $[A, B] \in \{0,1\}^2$ are not mutually exclusive.

\begin{figure}[t]
    \centering 
    \includegraphics[width=0.4\paperwidth]{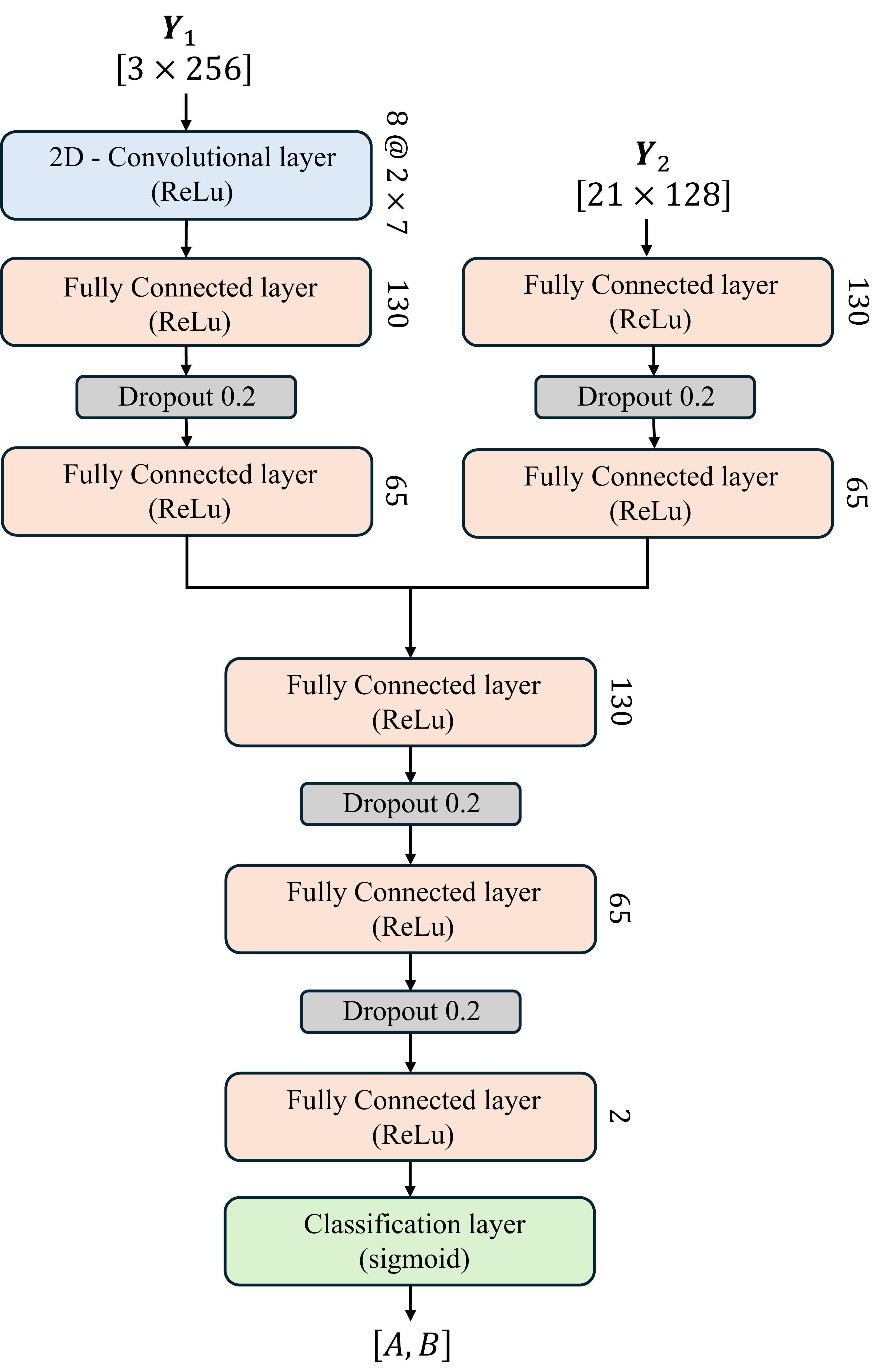} 
    \caption{A schematic representation of the architecture of the proposed CNN for start and tail detection. The size of each layer is specified.}
    \label{fig:CNNarch}
\end{figure}

The network inputs are derived from segments of length $L_\text{pre}=L_\text{tail}$ extracted from the received signal buffer, each belonging to one of the following classes: 
\begin{itemize}
    \item H$_0$: $[A,B]=[0,0]$, noise or non-informative fragments;
    \item H$_1$: $[A,B]=[1,0]$, start sequence; 
    \item H$_2$: $[A,B]=[0,1]$, tail sequence; 
    \item H$_3$: $[A,B]=[0,0]$, codeword;\footnote{Although H$_0$ and H$_3$ share the same label $[0,0]$, they are considered separate classes to distinguish non-informative fragments from valid codewords. While it is theoretically possible to merge them under the label $[0,0]$, the construction of the input $\bm{Y}_2$ differs in each case.}
    \item H$_4$: $[A,B]=[1,1]$, overlapping start  and tail.
\end{itemize}

Even with a single training process, this network can perform two tasks: the first one focuses on frame synchronization through the detection of the start sequence; then, once synchronization is achieved, the network is employed for tail detection. In this case, the second branch of the \ac{DL} model leverages information obtained from the decoder.
It is worth noting that, even during the tail detection phase, the network is able to identify start sequences that are exactly overlapping a codeword or a tail sequence, thanks to the multi-label classification mechanism.

\subsection{Start sequence detection}
The goal is to verify whether $L_\text{pre}$ consecutive received samples correspond to the known start  sequence. 
The input to the first branch of the \ac{DL} model is the observation matrix \( \bm{Y}_1 \in \mathbb{R}^{3 \times (2 L_{\text{pre}})} \), which serves as a feature map constructed by concatenating three components: 
(i) the real and imaginary parts of the known start  sequence, 
(ii) the real and imaginary parts of the \ac{MRC} estimate of the received samples \( \hat{\mathbf{x}}^\top \in \mathbb{C}^{L_{\text{pre}}} \), computed using~\eqref{eq:mrc_estimation} and the channel estimate from~\eqref{eq:ch_estimate}, and 
(iii) the real and imaginary parts of the known tail sequence. 
The resulting matrix is structured as:
\begin{align}
    \bm{Y}_1 & = 
    \setlength\arraycolsep{2pt}
    \begin{bmatrix}
        \Re(\bf{x}_\text{pre}) & \Im(\bf{x}_\text{pre}) \\
        \Re(\hat{\bf{x}}) & \Im(\hat{\bf{x}}) \\
        \Re(\bf{x}_\text{tail}) & \Im(\bf{x}_\text{tail})
    \end{bmatrix}  
    \,.
    \label{eq:Y1}
\end{align}
To process the two-dimensional input feature map, we exploit a convolutional branch as shown in Fig.~\ref{fig:CNNarch}, consisting of a convolutional layer with $8$ filters of size  $2 \times 7$ (with no padding and stride $1$), followed by two fully connected layers of $130$ and $65$ neurons, respectively. All three layers use the \ac{ReLU} activation function.
The input to the second branch is the observation matrix $\bm{Y}_2 \in \mathbb{R}^{(i_{\max}+1) \times L_\text{tail}}$, defined as 
\begin{align}
    \bm{Y}_2 & = 
    \setlength\arraycolsep{2pt}
    \begin{bmatrix}
        \bm{L} \\
        \bm{f}
    \end{bmatrix}.
    \label{eq:Y2}
\end{align}
In the case of start detection, the matrix $\bm{L}\in\mathbb{R}^{i_{\max}\times L_\text{tail}}$ is filled with the correlation value computed between the current input and the known preamble sequence, while the vector $\bm{f} \in\mathbb{R}^{1\times L_\text{tail}} $ is set to all zeros. The second branch is specifically designed for use during tail detection, as discussed next, and consists of two fully connected layers with \ac{ReLU} activation functions, containing $130$ and $65$ neurons, respectively.
Start sequence detection is addressed as a binary classification task in which the output label $A$ takes the value $1$ to denote the presence of the start sequence, and $0$ to indicate its absence.

\subsection{Tail detection}
The objective is to determine whether $L_\text{tail}$ consecutive received samples match the known tail sequence. 
At this stage, since the packet has already been synchronized, we operate on blocks of $L_\text{tail}$ samples, each of which is known to represent either a codeword or a tail sequence. These blocks are fed to both the decoder and the \ac{CNN}, the latter being a support for tail detection.
The input to the first branch, $\bm{Y}_1$, is obtained as per \eqref{eq:Y1}, while the input of the second branch, $\bm{Y}_2$, described in \eqref{eq:Y2}, exploits additional information extracted from the decoding process. Given a code with block length $L_\text{code}=L_\text{tail}$, we build $\bm{L}\in\mathbb{R}^{i_{\max}\times L_\text{code}}$ as a matrix that contains the \acp{LLR} calculated by the decoder during each of the $i_{\max}$ iterations, for each of the $L_\text{code}$ code bits. The decoder is run for a fixed number of iterations $i_{\max}$, without early stopping, so as to ensure a consistent input dimensionality for the network.
Instead, $\bm{f} \in \{ \mathbf{0}_{L_\text{code}}, \mathbf{1}_{L_\text{code}} \}$ is a flag vector indicating whether the decoder converged (i.e., it returned a codeword) or not.
Similarly to the start sequence, tail detection is formulated as a binary classification problem, with the output label $B$ set to $1$ if a tail is detected, and $0$ otherwise.

\begin{Ass}
The proposed framework for tail detection is not restricted to \ac{BP} decoding of \ac{LDPC} codes. 
More generally, it applies to any coding scheme for which a \ac{SISO} incomplete \cite[Ch.~1]{Blahut2003} decoder is available, that is, a decoder that returns a real-valued string as output that does not necessarily correspond to a valid codeword.
Moreover, the approach naturally extends to concatenated coding schemes, where a 
\ac{CRC} is employed as an outer code, enabling error detection after inner decoding.
\end{Ass}

\subsection{CNN complexity}

We evaluate the computational complexity of the CNN in terms of \acp{FLOP}; we assume that real addition, subtraction, and multiplication count as a single FLOP, while division and exponential operations count as 4 and 8 \acp{FLOP}, respectively. For the complex addition and subtraction operations we consider two \acp{FLOP}, while for complex multiplication we consider six \acp{FLOP}.

The number of \acp{FLOP} of a convolutional layer is given by
\begin{align}
    C_{\mathrm{cv}}  &= 2 \cdot N_{\mathrm{cv}} \cdot F_{\mathrm{cv}} \cdot G_{\mathrm{cv}} \cdot D_{\mathrm{cv}},
\end{align}
where $N_{\mathrm{cv}}$, $F_{\mathrm{cv}}$, $G_{\mathrm{cv}}$, and $D_{\mathrm{cv}}$ represent the number of convolution filters, size of the filter, number of channels, and output shape, respectively. The output shape $D_{\mathrm{cv}}$ is expressed as $(I-J+2 \cdot P)/S+1$, where $I$, $J$, $P$, and $S$ specify the input size, filter size, padding, and stride. The \ac{ReLU} activation function applied to the output of the convolutional layer, results in
\begin{align}
    C_{\mathrm{ReLU}} &=  N_{\mathrm{cv}} \cdot D_{\mathrm{cv}} \,.
\end{align}
The number of \acp{FLOP} in a fully-connected layer can be expressed as
\begin{align}\label{eq:complex_FC}
    C_{\mathrm{FC}} &=   2 \cdot \mathrm{in} \cdot \mathrm{out} + \mathrm{out}\, .
\end{align}
where $\mathrm{in}$ and $\mathrm{out}$ are the input and output dimensions of the layer.
The sigmoid function and thresholding in the last layer yield $4$ FLOPs for each output label. Details about the complexity of the \ac{CNN} are reported in Table~\ref{tab:CNN_compl}; the overall complexity results in approximately $1.940\cdot 10^6$ \acp{FLOP}.
Considering an embedded edge platform such as the \textit{NVIDIA Jetson Orin Nano 4GB}, which achieves up to $1.04$\,TFLOPs/s in FP32 operations, the inference time can be estimated as
\begin{align*}
t_{\text{run}} \approx \frac{1.940 \times 10^6\ \text{\acp{FLOP}}}{1.04 \times 10^{12}\ \text{\acp{FLOP}/s}} \approx 1.87\ \text{$\mu$s}.
\end{align*}

This extremely low runtime confirms the practical feasibility of real-time implementation on resource-constrained industrial hardware. Furthermore, batching and parallel inference strategies could be adopted to further scale the processing throughput if needed.

\section{Numerical results}\label{sec:numres}

Through numerical simulations we evaluate the detection performance of the proposed \ac{DL}-based method, and compare it to that of alternative existing methods. Then, we assess the performance of the whole system under different traffic conditions, considering various end-to-end metrics.  

\subsection{Simulation setup}

We consider a distributed motion subsystem where controllers are  uniformly distributed within a $6\,\mathrm{m} \times 6\,\mathrm{m} \times 2\,\mathrm{m}$ indoor industrial volume. They are equipped with one transmit antenna and $N_R=4$ receive antennas. 
The $(128, 64)$ binary \ac{LDPC} code standardized in \cite{bluebook} is employed, and all transmissions are \ac{BPSK} modulated. We set the number of decoding iterations to $i_{\max} = 20$, except when explicitely declared. Every coded data block contains $\iota\cdot L_\mathrm{code}=\iota\cdot 128$ symbols, where  $\iota \in [1, 5]$  is chosen uniformly at random for each transmission.  Moreover,  $L_{\text{pre}} = L_{\text{tail}} = 128$. The start and tail sequences are the same for all the controllers. In hexadecimal those sequences are: 
\begin{equation*}
\mathrm{1C71\,B91B\,A9BA\,8457\,B4BC\,5054\,BFD0\,5540}, 
\end{equation*}
\begin{equation*}
\mathrm{AA6C\,CB0C\,C243\,AC5F\,39DC\,7AF4\,640B\,5D95},
\end{equation*} 
respectively\footnote{The tail sequence has been designed using the methods in \cite{giuliani2025access, BattaglioniTAES2025}.}. We simulate controllers transmitting $N_\mathrm{rep} = 2$ replicas within \acp{VF} containing $N_{\mathrm{S}} = 10$ slots.

The wireless channel is modeled as flat Rician fading with log-normal shadowing and distance-dependent path loss, according to \eqref{eq:logno},  
where $\sigma^2_{\mathrm{dB}}=9$, $\lambda_c=0.125\,\mathrm{m}$, $d_{\mathrm{ref}}=1\,\mathrm{m}$, $\beta = 2.1$, $P_t=75$ mW. The \ac{AWGN} power is $\sigma^2_w = -120\,\text{dBm}$. 
We assumed a $K$-factor of $4$ dB. 

The \ac{CNN} has been trained on a dataset of $20 \cdot 10^3$ samples for classes $H_1, \ H_2 \ \text{and} \ H_3$. Instead, since the corresponding occurrences are rare, the $H_4$ class has been trained with $10 \cdot 10^3$ samples. All of those samples are collected from superframe buffers of $N_R$ × $100\,000$ complex symbols (i.e., one sub-buffer  for each antenna). 
The 70$\%$ of these samples were used for training while the remaining 30$\%$ for validation: for the training, the \ac{CNN} made use of the binary cross-entropy loss function and the Adam optimizer (e.g. an adaptive gradient-based optimization algorithm that 	takes advantage of mini-batches to allow efficient training). During simulations, each hyperparameter was individually varied while the others were kept fixed to evaluate its impact on performance \cite{GooBenCou16}. The values for dropout rate, mini-batch size and learning rate that yield the best results are $0.2$, $50$ and $0.001$, respectively.

\subsection{Start and tail sequences detection accuracy}

Let us evaluate the detection accuracy of the proposed multi-label CNN by comparing it with two benchmark approaches: the classic \ac{GLRT} in \cite{Chiani2006}\footnote{In \cite{Chiani2006}, the analysis is carried out for the single-antenna receiver case, where the detection metric is computed directly from the received signal affected by a single realization of noise and interference. In our scenario, instead, each controller is equipped with multiple receive antennas. The signals collected by each antenna experience independent noise realizations and different channel coefficients. To exploit the spatial diversity provided by the antenna array, we combine them using \ac{MRC}, similar to \eqref{eq:mrc_estimation}.} and the \ac{SVM}-based detector proposed in \cite{battaglioni2025mltail} for tail sequence identification. The analysis is structured as follows: first, we compare the ROC performance of the three methods under a fixed traffic level to establish a baseline. Subsequently, we characterize the CNN's detection accuracy across varying traffic intensities $\lambda$ to assess its robustness in different network scenarios.

The classification performance is evaluated using detection rate (recall) $$R  = \mathrm{TP}/(\mathrm{TP} + \mathrm{FN})$$ and false alarm rate $$F = \mathrm{FP}/(\mathrm{FP} + \mathrm{TN}).$$ True positives ($\mathrm{TP}$), true negatives ($\mathrm{TN}$), false positives ($\mathrm{FP}$), and false negatives ($\mathrm{FN}$) refer to correct and incorrect classifications.
Additionally, we consider accuracy, which measures the overall fraction of correct decisions as $$\text{Accuracy} = \frac{\mathrm{TP} + \mathrm{TN}}{\mathrm{TP} + \mathrm{TN} + \mathrm{FP} + \mathrm{FN}}$$and precision, which quantifies the reliability of positive detections as $$\text{Precision} = \frac{\mathrm{TP}}{\mathrm{TP} + \mathrm{FP}}.$$
The multi-label CNN employs sigmoid activations to independently detect the preamble ($A$) and tail ($B$) sequences, with $R$ and $F$ metrics computed for each. The GLRT follows the methodology in \cite{Chiani2006} for both sequences, but the tail detection incorporates a preliminary MRC stage \eqref{eq:mrc_estimation} using preamble-derived channel estimates. Fig.~\ref{fig:ROC} illustrates the ROC performance of the CNN (referred to as CNN$_{01}$), GLRT, and the SVM-based tail detector \cite{battaglioni2025mltail} for a fixed traffic level of $\lambda=0.01$.
As shown in Fig. \ref{fig:ROC}, the CNN-based approach significantly outperforms both the GLRT and the SVM \cite{battaglioni2025mltail}, showing superior robustness under equal false alarm conditions. For example, at a false alarm rate of $0.1$, the CNN achieves detection rates of $0.9658$ for the preamble ($+38.9\%$ over GLRT) and $0.9822$ for the tail ($+10.1\%$ over GLRT and $+28.3\%$ over SVM). 

\begin{figure}[t]
\centering 
    \includegraphics[width=0.4\paperwidth]{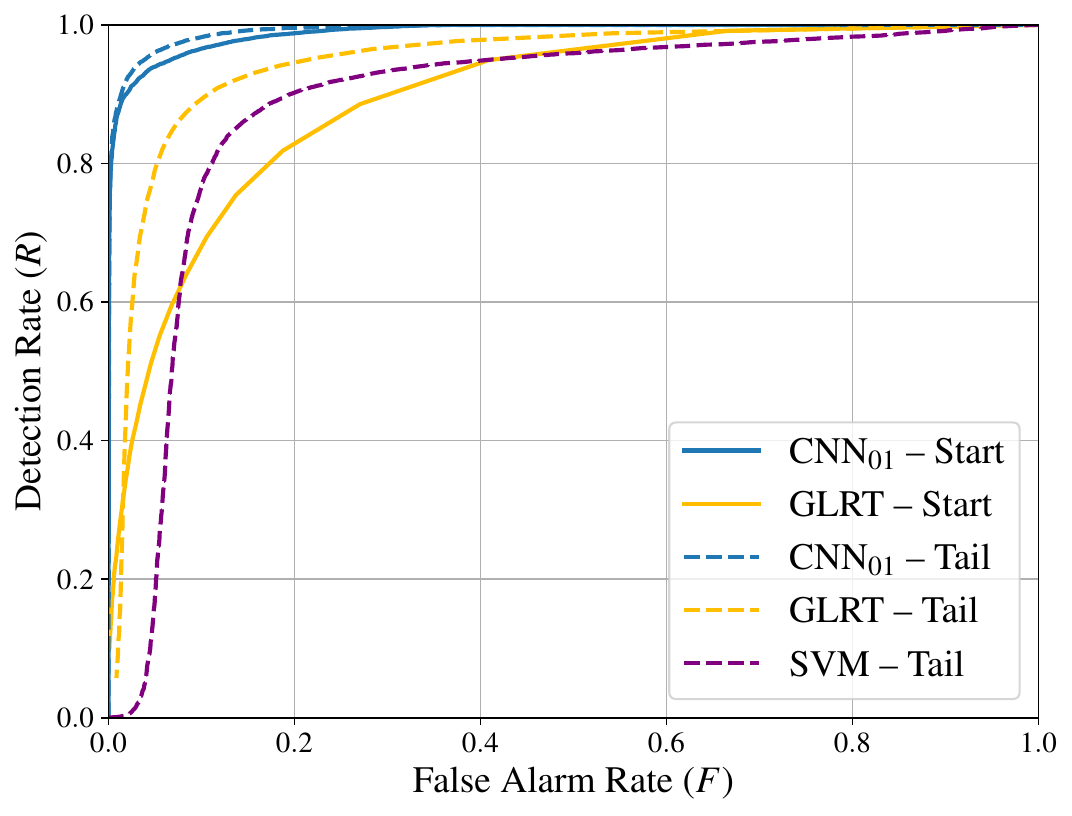} 
    \vspace{-.2cm}\caption{Comparison of \ac{ROC} curves among the \ac{DL}-based detector (CNN$_{01}$), the \ac{GLRT}, and the \ac{SVM} classifier, for a fixed traffic level $\lambda=0.01$.}
    \label{fig:ROC}
\end{figure}

Next, we characterize the CNN-based detector under varying traffic conditions. Three distinct CNN models, denoted as $\text{CNN}_{01}$, $\text{CNN}_{02}$, and $\text{CNN}_{03}$, were trained and evaluated for traffic intensities $\lambda \in \{0.01, 0.02, 0.03\}$, respectively. Solid curves in Fig.~\ref{fig:ROCvarynglambda} illustrate the corresponding ROC curves for both start (\ref{fig:roc_start}) and tail (\ref{fig:roc_tail}) detection across these scenarios. 
Each ROC curve is computed using a test dataset consisting of $10 \cdot 10^3$ samples per class, with the exception of class $\text{H}_4$, for which $5 \cdot 10^3$ samples are considered.
Additionally, Table~\ref{tab:cnn_performance} summarizes the performance metrics (accuracy, precision, and detection rate ($R$)) for each model and label, computed at a fixed false alarm rate ($F$) of $0.1$. As the traffic load $\lambda$ increases, the main performance degradation is observed in recall, while precision remains nearly constant due to the fixed false alarm rate. This indicates that interference primarily leads to missed detections rather than false positives, with only a moderate reduction in accuracy. The impact of missed detections, however, is mitigated in practice, since multiple \ac{SIC} iterations allow replicas not detected in one round to be recovered in subsequent decoding attempts. As expected, the Tail label consistently achieves higher recall than the Start label, showing that the additional information provided by the decoder \acp{LLR} improves detection reliability.

Dashed curves in Fig. \ref{fig:ROCvarynglambda} show the detection performance for the start (\ref{fig:roc_start}) and tail (\ref{fig:roc_tail}) sequences achieved by the CNN$_{01}$ model, (i.e. the one trained with $\lambda=0.01$), when deployed in higher-traffic scenarios. 
It can be observed that in all cases the detection performance achieved by CNN$_{01}$ is only marginally worse than that obtained by the CNNs trained on the corresponding traffic scenario. This demonstrates the robustness and good generalization capability of the designed \ac{NN}, which is able to maintain reliable operation and good performance even when deployed in traffic conditions denser than those considered during training. This is encouraging for real-world deployment, where traffic load is inherently non-stationary and may vary significantly over time.

\begin{figure}[!t] 
    \centering
    \subfloat[Start sequence performance.]{
        \includegraphics[width=0.4\paperwidth]{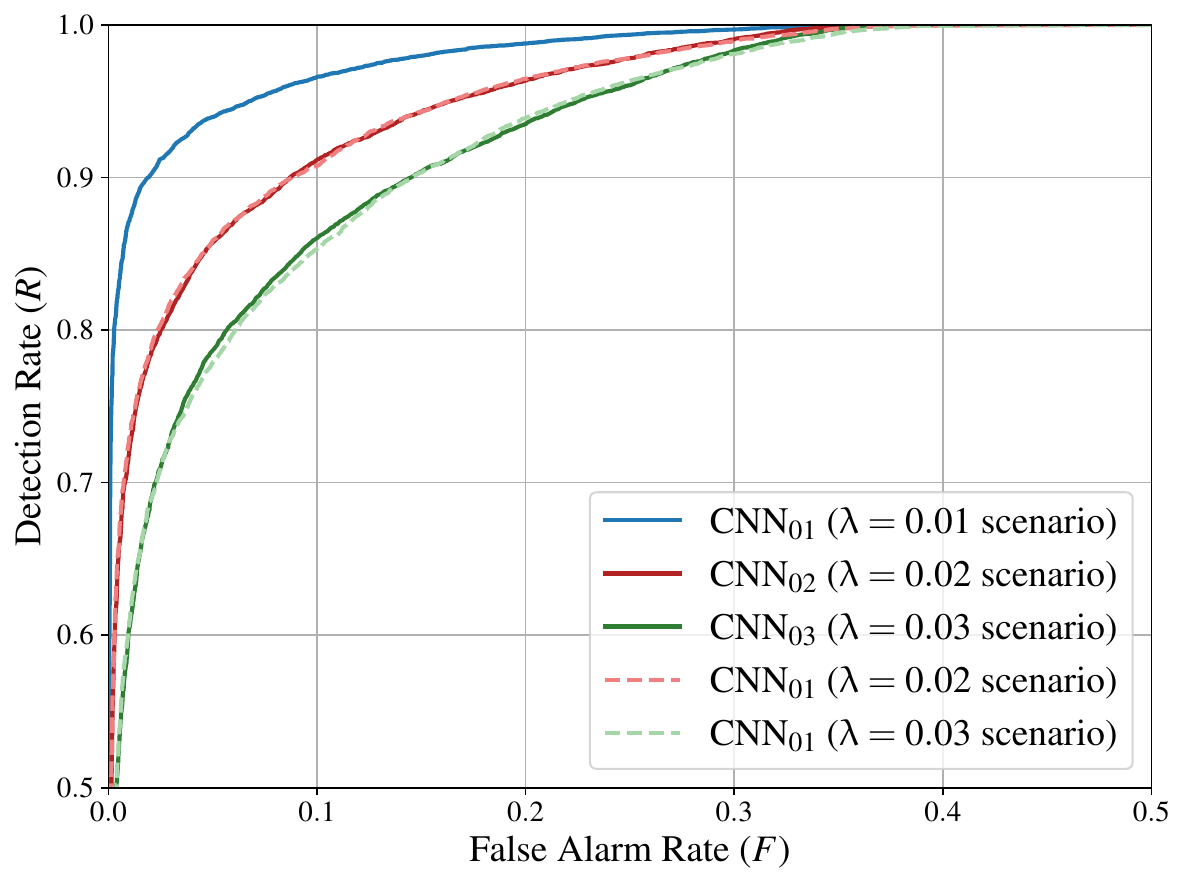}
        \label{fig:roc_start}
    }
    \hfil
    \subfloat[Tail sequence performance.]{
        \includegraphics[width=0.4\paperwidth]{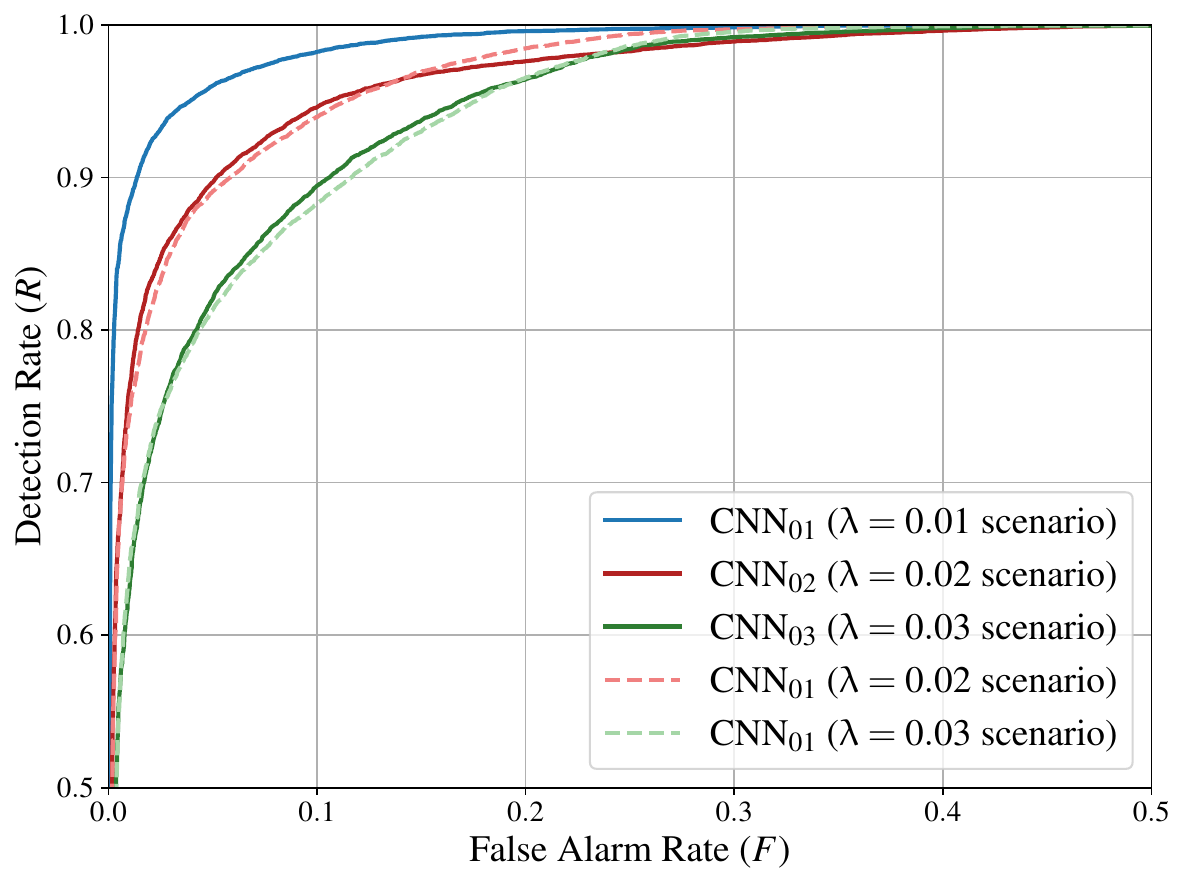}
        \label{fig:roc_tail}
    }
    \caption{ROC curves comparison for different values of $\lambda$. Solid curves represent tests conducted on scenarios with the same $\lambda$ used during training, whereas dashed curves represent mismatched traffic evaluation on CNN$_{01}$.}
    \label{fig:ROCvarynglambda}
\end{figure}
\begin{table}[t]
\centering
\caption{Performance metrics for CNN models trained with different $\lambda$ values, for a fixed false alarm rate $ F=0.1$.}
\label{tab:cnn_performance}
\resizebox{0.9\columnwidth}{!}{%
\renewcommand{\arraystretch}{1.3}
\tiny
\begin{tabular}{c|c|c|c|c}
\hline
\textbf{Model} & \textbf{Label} &  \textbf{Acc.} & \textbf{Prec.} & \textbf{$\bm R$}  \\
\hline
\multirow{2}{*}{CNN$_{\text{\tiny 01}}$} & Start ($A$) & 0.9219 & 0.8285 & 0.9658  \\
& Tail ($B$) & 0.9273 & 0.8307 & 0.9822  \\
\hline
\multirow{2}{*}{CNN$_{\text{\tiny 02}}$} & Start ($A$) & 0.9039 & 0.8203 & 0.9115   \\
& Tail ($B$) & 0.9153 & 0.8257 & 0.9460   \\
\hline
\multirow{2}{*}{CNN$_{\text{\tiny 03}}$} & Start ($A$) & 0.8867 & 0.8116 & 0.8602  \\
& Tail ($B$) & 0.8981 & 0.8175 & 0.8945   \\
\hline
\end{tabular}%
}
\end{table}

\subsection{End-to-end metrics}

We investigate the system performance as a function of the traffic load
$\lambda$ by means of Monte Carlo simulations. In each realization a
single receiver is considered and a finite observation window is defined,
corresponding to one \ac{SF} of $10\,000$ symbols. The \ac{SIC}-aided decoding procedure described in Fig. \ref{fig:receiver_scheme} is applied to the signals received during this interval. 

Let us define the \ac{PLR}  as the ratio between the number of packets that are not recovered at the receiver and the total number of transmitted packets, i.e.,
\[
\text{PLR} \triangleq \frac{N_{\mathrm{err}} + N_{\mathrm{miss}}}{N_{\mathrm{tx}}},
\]
where $N_{\mathrm{err}}$ denotes the number of incorrectly decoded packets and $N_{\mathrm{miss}}$ denotes the number of packets that are not detected at the receiver due to the missed detection of either the start or the tail sequence.  Fig. \ref{fig:plr} reports the \ac{PLR} as a function of the arrival rate $\lambda$ for the \ac{CNN}-based detector and the \ac{GLRT} baseline under different \ac{SF} lengths. For all considered \ac{SF} lengths, the \ac{CNN}-based detector consistently outperforms the \ac{GLRT} approach. A transient effect is also observed when the \ac{SF} length ($10\,000$) is comparable with the \ac{VF} length ($8\,960$). Under this condition, the reduced traffic load at the boundaries of the \ac{SF} leads to a significantly lower \ac{PLR} than that measured for longer \ac{SF} lengths. As the \ac{SF} length increases beyond the \ac{VF} length, this effect diminishes gradually.

Finally, Fig.~\ref{fig:plriter} reports the maximum traffic load $\lambda$ achievable while meeting a target \ac{PLR} of $10^{-3}$, as a function of the  number of decoding iterations $i_{\max}$. The results show that the \ac{CNN}-based detector consistently supports higher loads compared to the \ac{GLRT}-based approach. Overall, the \ac{CNN}-based approach provides a gain between approximately $2.5\times$ and $4.6\times$ in achievable load for the same reliability target.

\begin{figure}
     \centering
     \resizebox{0.4\paperwidth}{!}{%
%
%
\definecolor{mycolor1}{rgb}{0.06600,0.44300,0.74500}%
\definecolor{mycolor2}{rgb}{0.12941,0.12941,0.12941}%
\definecolor{mycolor3}{rgb}{0.66941,0.12941,0.12941}%
\definecolor{mycolor4}{rgb}{0.36941,0.112941,0.72941}%

\begin{tikzpicture}

\begin{axis}[
  width=7.718in,
  height=4.89in,
  at={(1.295in,0.66in)},
  scale only axis,
  xmin=0.00004, xmax=0.0569347972009246,
  xlabel={$\lambda$},
  ymode=log,
  ymin=0.000010719187365197, ymax=1,
  yminorticks=true,
  ylabel={PLR},
  legend style={font=\LARGE, at={(0.5,0.37)}, anchor=north west, legend cell align=left, align=left, draw=white!15!black},
  axis background/.style={fill=white},
  xmajorgrids, ymajorgrids, yminorgrids,
  label style={font=\LARGE, text=mycolor2},
  tick label style={font=\LARGE, text=mycolor2}, 
  tick style={line width=1.1pt},
  major tick length=4.5pt,
  minor tick length=3pt
]

\addplot [color=mycolor1,line width=1.5pt,mark=o,mark size=3.2pt,mark options={solid,mycolor1}]
table[row sep=crcr]{%
0.005 0.000065\\
0.01 0.000243\\
0.015 0.0005\\
0.02 0.00142\\
0.025	0.004818\\
0.03 0.011014\\
0.035 0.040714\\
0.04	0.0678\\
0.045	0.15\\
0.05	0.25\\
};
\addlegendentry{CNN - SF Length $10\,000$}

\addplot [color=mycolor1,line width=1.5pt, dashed , mark=o,mark size=3.2pt,mark options={solid,mycolor1}]
table[row sep=crcr]{%
0.01 0.315327\\
0.0075 0.251383\\
0.005 0.144928\\
0.0025 0.043607\\
0.001 0.003335\\
0.00075 0.00083\\ 
};
\addlegendentry{GLRT - SF Length $10\,000$}

\addplot [color=mycolor2,line width=1.5pt,mark=triangle ,mark size=3.2pt,mark options={solid,mycolor2}]
table[row sep=crcr]{%
0.0015 0.000242
0.002 0.000466\\
0.0025 0.000663\\
0.0035 0.001461\\
0.005 0.009424\\
0.0075 0.13\\
0.01 0.439\\
0.015 0.7359\\
0.02 0.8465\\
};
\addlegendentry{CNN - SF Length $20\,000$}

\addplot [color=mycolor2,dashed,line width=1.5pt,mark=triangle ,mark size=3.2pt,mark options={solid,mycolor2}]
table[row sep=crcr]{%
0.01 0.661954\\
0.0075 0.500471\\
0.005 0.195652\\
0.0025 0.049009\\
0.001 0.003790\\
0.00075 0.00141\\
};
\addlegendentry{GLRT - SF Length $20\,000$}

\addplot[color=mycolor3,line width=1.5pt,mark=x,mark size=3.2pt,mark options={solid,mycolor3}]table[row sep=crcr]{
0.01 0.77\\
0.0075 0.565\\
0.005 0.039\\
0.0025 0.002\\
0.002 0.0013\\
0.001 0.000538\\
};
\addlegendentry{CNN - SF Length $50\,000$}

\addplot[color=mycolor3, dashed, line width=1.5pt,mark=x,mark size=3.2pt, dashed, mark options={solid,mycolor3}]table[row sep=crcr]{
0.01 0.800965\\
0.0075 0.627907\\
0.005 0.287162\\
0.0025 0.047952\\
0.001 0.004425\\
0.00075 0.001691\\
};
\addlegendentry{GLRT - SF Length $50\,000$}


\end{axis}

\end{tikzpicture}%
    }
    \caption{Packet loss rate vs traffic load for different superframe lengths.}
    \label{fig:plr}
\end{figure}

\begin{figure}
     \centering
     \resizebox{0.4\paperwidth}{!}{%
       \begin{tikzpicture}
\begin{axis}[
    width=7.2in,
    height=4.6in,
    xlabel={$i_{\max}$},
    ylabel={$\lambda$ at $\mathrm{PLR}=10^{-3}$},
    xmin=4, xmax=21,
    ymin=0, ymax=0.0034,
    xtick={5,10,15,20},
    ymajorgrids,
    xmajorgrids,
    scaled y ticks=false,
    yticklabel style={
        /pgf/number format/fixed,
        /pgf/number format/precision=4
    },
    legend style={
        font=\large,
        at={(0.03,0.97)},
        anchor=north west,
        draw=white!15!black
    },
    label style={font=\LARGE},
    tick label style={font=\large},
    tick style={line width=1pt},
    line width=1.2pt,
]

\addplot[
    color=blue,
    mark=*,
    mark size=3.2pt,
    line width=1.5pt
]
coordinates {
    (5,  0.00168)
    (10, 0.00215)
    (15, 0.00239)
    (20, 0.00308)
};
\addlegendentry{CNN}

\addplot[
    color=red,
    dashed,
    mark=square*,
    mark size=3.2pt,
    line width=1.5pt
]
coordinates {
    (5,  0.000651)
    (10, 0.000658)
    (15, 0.000667)
    (20, 0.000671)
};
\addlegendentry{GLRT}

\end{axis}
\end{tikzpicture}
    }
\caption{Maximum supported traffic load $\lambda$ at a target  $\mathrm{PLR}=10^{-3}$ as a function of the number of decoding iterations $i_{\max}$.}
    \label{fig:plriter} 
\end{figure}

\section{Conclusion}\label{sec:concl}

We investigated asynchronous grant-free \ac{C2C} communications over a shared wireless medium, focusing on the physical-layer challenges arising from uncoordinated transmissions and variable-length packets.
We proposed a receiver architecture in which a single \ac{CNN} performs joint detection of start and tail sequences directly from the received signal. The proposed design integrates learning-based detection with LDPC decoding and \ac{SIC}, enabling full recovery of command units without relying on higher-layer framing or scheduling information.

We also provided a theoretical characterization of the most critical failure event, namely tail confusion caused by asynchronous overlapping transmissions. Under a Poisson activation model, we derived a closed-form expression for the probability of this event, highlighting how the network traffic intensity and replica placement influence the likelihood of erroneous termination decisions at the receiver.

Simulation results demonstrated that the proposed {CNN}-based detector significantly outperforms classical \ac{GLRT}-based methods and recently proposed machine-learning approaches in terms of detection accuracy. When integrated into the full receiver chain, the proposed solution achieves substantially lower \ac{PLR} values across a wide range of traffic conditions. The results also show that the \ac{CNN} maintains robust performance under traffic loads different from those used during training.

\bibliographystyle{IEEEtran}
\bibliography{biblio}

@TECHREPORT{5GACIA_IIoT_2021,
  author      = {{5G-ACIA}},
  title       = {{5G} for {I}ndustrial {I}nternet of {T}hings ({IIoT}): capabilities, features, and potential},
  institution = {{ZVEI -- German Electrical and Electronic Manufacturers' Association}},
  address     = {Frankfurt, Germany},
  year        = {2021},
  month       = oct
}

@BOOK{Blahut2003,
  author    = {Blahut, R. E.},
  title     = {Algebraic Codes for Data Transmission},
  publisher = {Cambridge Univ. Press},
  address   = {Cambridge, U.K.},
  year      = {2003}
}

@ARTICLE{Chang2021,
  author  = {Chang, B. and Li, L. and Zhao, G. and Chen, Z. and Imran, M. A.},
  journal = {IEEE Trans. Commun.},
  title   = {Autonomous {D2D} transmission scheme in {URLLC} for real-time wireless control systems},
  year    = {2021},
  volume  = {69},
  number  = {8},
  pages   = {5546--5558},
  month   = aug,
  doi     = {10.1109/TCOMM.2021.3075680}
}

@ARTICLE{Gao2021,
  author  = {Gao, J. and Zhuang, W. and Li, M. and Shen, X. and Li, X.},
  journal = {IEEE Internet Things J.},
  title   = {{MAC} for machine-type communications in industrial {IoT}---{P}art {I}: {P}rotocol design and analysis},
  year    = {2021},
  volume  = {8},
  number  = {12},
  pages   = {9945--9957},
  month   = jun,
  doi     = {10.1109/JIOT.2021.3051181}
}

@ARTICLE{Aceto2019,
  author  = {Aceto, G. and Persico, V. and Pescap{\'e}, A.},
  journal = {IEEE Commun. Surveys Tuts.},
  title   = {A survey on information and communication technologies for industry 4.0: state-of-the-art, taxonomies, perspectives, and challenges},
  year    = {2019},
  volume  = {21},
  number  = {4},
  pages   = {3467--3501},
  month   = dec,
  doi     = {10.1109/COMST.2019.2938259}
}

@ARTICLE{Cuozzo2025,
  author  = {Cuozzo, G. and Testi, E. and Riolo, S. and Miuccio, L. and Cena, G. and Pasolini, G. and De Nardis, L. and Panno, D. and Chiani, M. and Di Benedetto, M.-G. and Buracchini, E. and Verdone, R.},
  journal = {IEEE Commun. Standards Mag.},
  title   = {Research directions and modeling guidelines for industrial {I}nternet of {T}hings applications},
  year    = {2025},
  volume  = {9},
  number  = {3},
  pages   = {94--103},
  month   = sep,
  doi     = {10.1109/MCOMSTD.2025.3572534}
}

@INPROCEEDINGS{Yuri2017,
  author    = {Polyanskiy, Y.},
  booktitle = {Proc. 2017 IEEE Int. Symp. Inf. Theory},
  title     = {A perspective on massive random-access},
  address={Aachen, Germany},
  month=jun,
  year      = {2017},
  pages     = {2523--2527},
  doi       = {10.1109/ISIT.2017.8006984}
}

@INPROCEEDINGS{FossoNMS,
  author    = {Chen, J. and Fossorier, P. M. C.},
  booktitle = {Proc. 2002 IEEE Global Telecommun. Conf.},
  title     = {Density evolution for {BP}-based decoding algorithms of {LDPC} codes and their quantized versions},
  month=nov,
  address={Taipei, Taiwan},
  year      = {2002},
  pages     = {1378--1382},
  doi       = {10.1109/GLOCOM.2002.1188424}
}

@BOOK{bluebook,
  author    = {{Consultative Committee for Space Data Systems}},
  title     = {{TC} Synchronization and Channel Coding},
  publisher = {CCSDS},
  note      = {{Blue Book CCSDS 231.0-B-4}},
  address   = {Washington, DC, USA},
  month     = jul,
  year      = {2021}
}

@TECHREPORT{3gppTR22804,
  author      = {{3rd Generation Partnership Project (3GPP)}},
  title       = {{TR 22.804 V16.3.0: Study on communication for automation in vertical domains (Release 16)}},
  institution = {3GPP},
  type        = {Technical Report},
  number      = {TR 22.804 V16.3.0},
  year        = {2020},
  month       = jul,
  note        = {Available: \url{https://www.3gpp.org/ftp/Specs/archive/22_series/22.804/22804-g30.zip}}
}

@ARTICLE{Khan2024,
  author  = {Khan, M. U. and Testi, E. and Paolini, E. and Chiani, M.},
  journal = {IEEE Wireless Commun. Lett.},
  title   = {Preamble detection in asynchronous random access using deep learning},
  year    = {2024},
  volume  = {13},
  number  = {2},
  pages   = {279--283},
  month   = feb,
  doi     = {10.1109/LWC.2023.3325918}
}

@INPROCEEDINGS{battaglioni2025mltail,
  author    = {Battaglioni, M. and Giuliani, R. and Chiaraluce, F. and Baldi, M.},
  title     = {Machine learning-based tail sequence detection in {LDPC}-coded space transmissions},
  booktitle = {Proc. 2025 IEEE Wireless Commun. Netw. Conf.},
  address={Milan, Italy},
month=mar,
  year      = {2025},
}

@INPROCEEDINGS{battaglioni_cscn,
  author    = {Battaglioni, M. and Carnevali, E. and De Crescenzo, D. and Testi, E. and Paolini, E.},
  booktitle = {Proc. 2025 IEEE Conf. Standards Commun. Netw.},
  title     = {Boundary detection via deep learning for grant-free asynchronous random access in control-to-control industrial networks},
  year      = {2025},
  month=sep,
  address={Bologna, Italy},
  doi       = {10.1109/CSCN67557.2025.11230747}
}

@ARTICLE{Decre2025,
  author  = {De Crescenzo, D. and Testi, E. and Paolini, E.},
  journal = {IEEE Wireless Commun. Lett.},
  title   = {{Deep learning for replica detection and combining in asynchronous grant-free mMTC}},
  year    = {2025},
  volume  = {15},
  pages   = {225--229},
  month   = oct,
  doi     = {10.1109/LWC.2025.3621311}
}

@ARTICLE{BattaglioniTAES2025,
  author  = {Battaglioni, M. and Andrews, K. and Giuliani, R. and Marinelli, F. and Chiaraluce, F. and Baldi, M.},
  journal = {IEEE Trans. Aerosp. Electron. Syst.},
  title   = {{Design and analysis of the tail sequence for short LDPC-coded space communications}},
  year    = {2025},
  volume  = {61},
  number  = {6},
  pages   = {16938--16952},
  month   = dec,
  doi     = {10.1109/TAES.2025.3599498}
}

@ARTICLE{Paolini2015,
  author  = {Paolini, E. and Liva, G. and Chiani, M.},
  journal = {IEEE Trans. Inf. Theory},
  title   = {{Coded slotted ALOHA: a graph-based method for uncoordinated multiple access}},
  year    = {2015},
  volume  = {61},
  number  = {12},
  pages   = {6815--6832},
  month   = dec,
  doi     = {10.1109/TIT.2015.2492579}
}

@ARTICLE{DeGaud2014,
  author  = {De Gaudenzi, R. and Del R{\'i}o Herrero, O. and Acar, G. and Garrido Barrab{\'e}s, E.},
  journal = {IEEE Trans. Wireless Commun.},
  title   = {{Asynchronous contention resolution diversity ALOHA: making CRDSA truly asynchronous}},
  year    = {2014},
  volume  = {13},
  number  = {11},
  pages   = {6193--6206},
  month   = nov,
  doi     = {10.1109/TWC.2014.2334620}
}

@ARTICLE{Casini2007,
  author  = {Casini, E. and De Gaudenzi, R. and Del R{\'i}o Herrero, O.},
  journal = {IEEE Trans. Wireless Commun.},
  title   = {{Contention resolution diversity slotted ALOHA (CRDSA): an enhanced random access scheme for satellite access packet networks}},
  year    = {2007},
  volume  = {6},
  number  = {4},
  pages   = {1408--1419},
  month   = apr,
  doi     = {10.1109/TWC.2007.348337}
}

@BOOK{GooBenCou16,
  author    = {Goodfellow, I. and Bengio, Y. and Courville, A.},
  title     = {Deep Learning},
  publisher = {MIT Press},
  year      = {2016},
  note      = {\url{http://www.deeplearningbook.org}}
}

@INPROCEEDINGS{Azari2017,
  author    = {Azari, A. and Popovski, P. and Miao, G. and Stefanovic, C.},
  booktitle = {Proc. 2017 IEEE Global Commun. Conf.},
  title     = {{Grant-free radio access for short-packet communications over 5G networks}},
 address = {Singapore},
 month=dec,
 year=2017,
  doi       = {10.1109/GLOCOM.2017.8255054}
}

@ARTICLE{deSou2023,
  author  = {de Souza, J. H. I. and Abr{\~a}o, T.},
  journal = {IEEE Syst. J.},
  title   = {Deep learning-based activity detection for grant-free random access},
  year    = {2023},
  volume  = {17},
  number  = {1},
  pages   = {940--951},
  month   = mar,
  doi     = {10.1109/JSYST.2022.3175658}
}

@ARTICLE{giuliani2025access,
  author  = {Giuliani, R. and Battaglioni, M. and Baldi, M. and Chiaraluce, F. and Maturo, N.},
  journal = {IEEE Access},
  title   = {{Telecommand rejection probability in CCSDS-compliant LDPC-coded space transmissions with tail sequence}},
  year    = {2025},
  volume  = {13},
  pages   = {8924--8940},
  month=jan,
  doi     = {10.1109/ACCESS.2025.3527057}
}

@ARTICLE{Chiani2006,
  author  = {Chiani, M. and Martini, M. G.},
  journal = {IEEE Trans. Commun.},
  title   = {{On sequential frame synchronization in AWGN channels}},
  year    = {2006},
  volume  = {54},
  number  = {2},
  pages   = {339--348},
  month   = feb,
  doi     = {10.1109/TCOMM.2005.863727}
}

@ARTICLE{Hu2024,
  author  = {Hu, Y. and Jin, H. and Seo, J.-B.},
  journal = {IEEE Trans. Veh. Technol.},
  title   = {Asynchronous Random Access Systems With Immediate Collision Resolution for Low Power Wide Area Networks},
  year    = {2024},
  volume  = {73},
  number  = {2},
  pages   = {2755--2770},
  month   = feb,
  doi     = {10.1109/TVT.2023.3320353}
}

@INPROCEEDINGS{Almonacid2017,
  author    = {Almonacid, V. and Franck, L.},
  booktitle = {Proc. 2017 IEEE Int. Conf. Commun.},
  title     = {An asynchronous high-throughput random access protocol for low power wide area networks},
  month=may,
  year      = {2017},
  address= {Paris, France},
  doi       = {10.1109/ICC.2017.7996382}
}

@ARTICLE{Jiang2022,
  author  = {Jiang, H. and Qu, D. and Ding, J. and Wang, Z. and He, H. and Chen, H.},
  journal = {IEEE Wireless Commun.},
  title   = {Enabling {LPWAN} Massive Access: Grant-Free Random Access with Massive {MIMO}},
  year    = {2022},
  volume  = {29},
  number  = {4},
  pages   = {72--77},
  month   = aug,
  doi     = {10.1109/MWC.102.2100276}
}

@ARTICLE{Azari2021,
  author  = {Azari, A. and Stefanovi{\'c}, {\v{C}}. and Popovski, P. and Cavdar, C.},
  journal = {IEEE Trans. Green Commun. Netw.},
  title   = {Energy-Efficient and Reliable {IoT} Access Without Radio Resource Reservation},
  year    = {2021},
  volume  = {5},
  number  = {2},
  pages   = {908--920},
  month   = jun,
  doi     = {10.1109/TGCN.2021.3051033}
}

@INPROCEEDINGS{Ebrahimi2017,
  author    = {Ebrahimi, M. and Lahouti, F. and Kostina, V.},
  booktitle = {Proc. 2017 IEEE Int. Symp. Inf. Theory},
  title     = {Coded random access design for constrained outage},
  month=jun,
  year      = {2017},
  address = {Aachen, Germany},
  pages     = {2732--2736},
  doi       = {10.1109/ISIT.2017.8007026}
}

@ARTICLE{Zhu2021,
  author  = {Zhu, W. and Tao, M. and Yuan, X. and Guan, Y.},
  journal = {IEEE Trans. Wireless Commun.},
  title   = {Deep-Learned Approximate Message Passing for Asynchronous Massive Connectivity},
  year    = {2021},
  volume  = {20},
  number  = {8},
  pages   = {5434--5448},
  month   = aug,
  doi     = {10.1109/TWC.2021.3067903}
}

@ARTICLE{Liva2024,
  author  = {Liva, G. and Polyanskiy, Y.},
  journal = {Proc. IEEE},
  title   = {Unsourced Multiple Access: A Coding Paradigm for Massive Random Access},
  year    = {2024},
  volume  = {112},
  number  = {9},
  pages   = {1214--1229},
  month   = sep,
  doi     = {10.1109/JPROC.2024.3437208}
}

@ARTICLE{Choi2022,
  author  = {Choi, J. and Ding, J. and Le, N.-P. and Ding, Z.},
  journal = {IEEE Wireless Commun.},
  title   = {Grant-Free Random Access in Machine-Type Communication: Approaches and Challenges},
  year    = {2022},
  volume  = {29},
  number  = {1},
  pages   = {151--158},
  month   = feb,
  doi     = {10.1109/MWC.121.2100135}
}

\end{document}